\newtheorem{condition}{Condition}
\newtheorem{proposition}{Proposition}
\newtheorem{remark}{Remark}
\newenvironment{proof}[1][Proof]{\textbf{#1.} }{\ \rule{0.5em}{0.5em}}
\renewcommand{\cite}{\citeasnoun}
\begin{document}

\title{ Non-Existent Moments of Earnings Growth\thanks{\footnotesize\setlength{\baselineskip}{4.2mm} First arXiv date: March 15, 2022. We gratefully acknowledge the UK Data Service for granting access to NESPD and their guidance and assistance with regard to this data set. 
We benefited from useful comments by St\'ephane Bonhomme, Giuseppe Cavaliere, Eva F. Janssens, Gonzalo Paz-Pardo, Lorenzo Trapani, and seminar participants at University of Bologna and University of Pittsburgh.
This research was partially supported by the Italian Ministry of University and Research (PRIN Grant 2017TA7TYC).
All remaining errors are ours.
 \smallskip}}
\author{
Silvia Sarpietro\thanks{\footnotesize\setlength{\baselineskip}{4.2mm} Silvia Sarpietro: \texttt{silvia.sarpietro@unibo.it}. Department of Economics, University of Bologna, 40126 Bologna BO, Italy\smallskip}\\Bologna
\and
Yuya Sasaki\thanks{\footnotesize\setlength{\baselineskip}{4.2mm} Yuya Sasaki: \texttt{yuya.sasaki@vanderbilt.edu}. Department of Economics, Vanderbilt University, VU Station B \#351819, 2301 Vanderbilt Place, Nashville, TN 37235-1819, USA\smallskip}\\Vanderbilt
\and
Yulong Wang\thanks{\footnotesize\setlength{\baselineskip}{4.2mm} Yulong Wang: \texttt{ywang402@syr.edu}. Department of Economics, Syracuse University, 110 Eggers Hall, Syracuse, NY 13244-1020, USA}\\Syracuse
}
\date{}
\maketitle

\begin{abstract}\setlength{\baselineskip}{4.9mm}

The literature often employs moment-based earnings risk measures like variance, skewness, and kurtosis. However, under heavy-tailed distributions, these moments may not exist in the population. Our empirical analysis reveals that population kurtosis, skewness, and variance often do not exist for the conditional distribution of earnings growth. This challenges moment-based analyses. We propose robust conditional Pareto exponents as novel earnings risk measures, developing estimation and inference methods. Using the UK New Earnings Survey Panel Dataset (NESPD) and US Panel Study of Income Dynamics (PSID), we find: 1) Moments often fail to exist; 2) Earnings risk increases over the life cycle; 3) Job stayers face higher earnings risk; 4) These patterns persist during the 2007--2008 recession and the 2015--2016 positive growth period.


\medskip\noindent
\textbf{Keywords:} earnings \& income risk, heavy tail, conditional Pareto exponent.


\vfill
\end{abstract}

\section{Introduction\label{sec:intro}}

One of the main goals of the literature on income dynamics is to quantify income risk. 
Recent studies show that the distribution of income changes has heavier tails than the normal distribution. 
This evidence has relevant implications for how income risk affects economic decisions.\footnote{See, for instance, \citet*{golosov2016redistribution} and references therein.} 
Since heavy tails incur enormous risk premium costs in economies with risk-averse agents, quantifying the tail heaviness, along with features of preferences of economic agents, is necessary to measure the social costs of the uncertainties. 

However, the existing literature relies primarily on moment-based measures of income risk.
For instance, the recent influential paper by \citet*[][]{guvenen2021data} uses skewness and kurtosis to characterize the distributions of individual earnings dynamics in the U.S. and provide their policy implications. They target this set of moments in their proposed procedure to estimate the parameters of a flexible stochastic process of earnings. The estimated process further feeds back into economic research and policy analyses, for instance in the analysis of life-cycle incomplete market models.
While \textit{sample} skewness and \textit{sample} kurtosis are generally finite, their \textit{true population} counterparts may not exist when the distribution has a heavy tail.
If the population moments do not exist, then their sample counterparts can be arbitrarily large and hence misleading. 

In this paper, we address this issue by first proposing a method of estimation and inference about the conditional Pareto exponent, which characterizes the tail heaviness. 
Second and more importantly, applying our method to the administrative data set for the UK, the New Earnings Survey Panel Dataset (NESPD), and the US Panel Study of Income Dynamics (PSID), we quantify the tail heaviness in the distribution of earnings changes and discuss how it varies with heterogeneous attributes of individuals. 
We find that the commonly used measures of income risk, such as sample kurtosis, skewness, and even standard deviation, may be misleading since their population counterparts may not exist (i.e., may be infinite) for the conditional distribution of earnings growth given age, gender, and past earnings. 
Finally, by interpreting the conditional Pareto exponent as a robust measure of the conditional earnings risk, we show that tail earnings risk increases over the life cycle, is higher for job-stayers, and these patterns appear in both the period 2007--2008 of the great recession and the period 2015--2016 of a positive growth despite some differences.

There is extensive literature on earnings and income dynamics models, dating back to the 1970s
-- see the comprehensive survey by \citet*{moffitt2018income}.
To explain the heavy tail feature, the existing literature exploits the following strategies: 
1) fitting data to mixed normal distributions using mixture approaches \citep*[][]{geweke2000empirical,bonhomme2009assessing}; 
2) conducting nonparametric density estimates and graphically exhibiting heavier tails than normal distributions with deconvolution and spectral decomposition approaches \citep*[][]{horowitz1996semiparametric,bonhomme2010generalized,arellano2017earnings,botosaru2018nonparametric,hu2019semiparametric};
3) reporting the sample moments such as skewness and kurtosis and comparing them with counterparts from normal distribution \citep*[][among others]{guvenen2021data,arellano2021}; 
4) reporting some quantile-based measures such as Kelly's measure of skewness and Crow-Siddiqui measure of kurtosis \citep*[][among others]{guvenen2021data}; and
5) reporting the mean absolute deviation \citep*[][]{arellano2021}.

The first approach, based on mixed normal distribution, essentially assumes an exponentially decaying tail, which cannot characterize tail heaviness.
The second one is based on nonparametric density estimation, which tends to behave poorly in the tails due to few extreme observations.
The third approach based on sample moments may yield misleading information if the population moment does not exist due to heavy tails.
In particular, the \textit{population} kurtosis may not exist even if a researcher can always get a finite \textit{sample} kurtosis. 
We indeed find that the population kurtosis, skewness, and even variance may not exist.
The fourth one is robust to heavy tails: these non-extremal quantile-based measures are always well-defined even if the population moments do not exist. 
However, they fail to capture extreme income changes and tail events. 
Among other things, information on tail events is central to quantifying earnings risk under heavy-tailed distributions.
Like the fourth one, the fifth approach based on the first moment alone is not informative about extreme income changes and tail events.

Given the above concerns about existing approaches, we aim for an alternative measure of tail heaviness to characterize the earnings risk conditional on individuals' attributes. 
The conditional Pareto exponent is proposed as our robust measure of tail heaviness.
We have the elegant characterization that the conditional kurtosis/skewness/variance is infinite if and only if the conditional Pareto exponent is smaller than the value of four/three/two.
We propose methods of estimation and inference about the Pareto exponent in the conditional distribution of earnings risk given observed attributes, such as gender, age, and base-year income level. 
Unlike the aforementioned approaches, this new method can be used to characterize the tail heaviness robustly even if the rate of tail decay is slower than exponential ones and even if the population kurtosis, skewness, or variance is infinite. 
We emphasize that we do \textit{not} make the assumption that data follow a Pareto distribution. Instead, we only assume the regularly varying (RV) tail condition, which nonparametrically encompasses a broad class of distributions, including Pareto, Student-t, Cauchy, and F distributions among many others. Numerous recent papers document that the distribution of income/earnings, both in levels and growth, roughly follows the power law, which necessarily entails ‘extreme’ observations and endorses our assumption -- see for instance \citet*{guvenen2021data} for earnings growth and \citet*{guvenen2022global} for additional empirical evidence that supports our assumption.\footnote{Based on the evidence provided by all papers submitted within the Global Income Dynamics Projects (GRID), \citet*{guvenen2022global} report that the distribution of income growth has thick Pareto tails in all 13 countries currently under the GRID.}
The RV tail condition is in line with these observations from the literature.

To illustrate the empirical significance, we apply the proposed method to two data sets: the UK NESPD and the US PSID, with our focus on the former. 
NESPD is an employer-based survey data set on individual earnings in the UK and has been investigated very recently \citep*[cf.][]{de2021wage,bell2021time}. 
We contribute to this new literature with the following three findings.
First, we find that the distribution of the earnings changes (measured by the difference between the log earnings across two subsequent years) conditional on age, gender, and past earnings exhibits substantially heavy tails. 
The conditional Pareto exponent is significantly less than four for the majority of the subpopulations, raising the concern that the kurtosis may be infinite. 
Also, the conditional Pareto exponent is significantly less than three or even two for some subpopulations, and we thus reject the hypothesis of finite conditional skewness and even finite conditional variance.
These results are robust across years, including the period of recession in 2007-2008 and the period of positive growth in 2015-2016 among others.

Second, we find remarkable patterns that the tail earnings risk is increasing over the life cycle. 
This is documented as that 40- and 50-year-old workers face higher earnings risk than 30-year-old workers. 
Third, we quantify the tail heaviness of the distribution of earnings changes for job-stayers and find that, at low earnings levels, conditional kurtosis does not exist, contrary to what we observe for the whole population. Excluding middle-aged men at top quantiles of earnings, we reject the hypothesis of finite conditional kurtosis for job-stayers at all earnings levels.
In Appendix \ref{sec:psid}, we also apply our method to the US PSID and obtain similar empirical findings on the tail heaviness of the conditional distribution of income risk. 

The above findings closely compare to the findings in \citet*{guvenen2021data} and the income literature inspired by their methodology.
In particular, \citet*{guvenen2021data} find in the US data set that the conditional variance of earnings risk is higher for younger workers at the bottom of the income distribution, and the kurtosis increases with age and with lagged income up to the top 5\% of the income distribution where it sharply declines. 
Since the data exhibit substantially heavier tails than those of the normal distribution, these sample moments could be less informative if their population counterparts are indeed infinite.
Our results confirm theirs, but now with our new measure of earnings risk being robust to infinite moments.
On the other hand, our findings differ from those in \citet*{arellano2021}, who examine the Spanish administrative data, and find that the income risk measured by the conditional mean absolute deviations is inversely related to income and age, and the income risk inequality increases markedly in the recession. 
The differences can be explained by a number of factors, such as different countries and different sample selection criteria as well as different measures of income risk for different research objects.


{\bf Organization:}
The rest of the paper is organized as follows.
Section \ref{sec:preview} describes the data to be examined in this article and provides some descriptive analysis of the heavy-tail phenomena in the earnings growth risk.
Section \ref{sec:econometric_method} formally introduces our econometric method, and Section \ref{sec:empirical} applies it to the UK data set.
Section \ref{sec:conclusion} concludes with some remarks.
The Appendix collects asymptotic theories, computational details, Monte Carlo simulations, additional empirical results with the UK and US data sets, and mathematical proofs. 

\section{Data and Preview\label{sec:preview}}

\subsection{New Earnings Survey Panel Data}\label{sec:admin}

We start with introducing the New Earnings Survey Panel Data (NESPD), an administrative data set at the individual level on UK earnings from the UK Social Security. It is an annual panel running from 1975 to 2016, and it surveys around 1\% of the UK workforce. All employees whose National Insurance Number (NIN) ends in a given pair of digits are included in the survey. The NIN number is randomly issued to all UK residents at age 16 and kept constant throughout the lifetime of an individual. The NESPD is a survey directed to all employers whose employees qualify for the sample: the employers complete the questionnaire based on payroll records for their employees. As a result of being directed to the employer, NESPD has a low non-classical measurement error. 
 
The survey reports the employees' gender, age, and detailed work-related information: annual, weekly, and hourly earnings, hours of work, occupation, industry, working area, firms' number of employers, and unionization. This information relates to a specified week in April of each year: the data sample is taken on the first day of April of each calendar year and concerns complete employee records only. The NESPD contains complete information on the employees' working life from the first year they started working until retirement age over the years 1975-2015, as long as the employer answered the questionnaire and the individual was working with the last recorded employer in April.

Given that NESPD contains detailed information on earnings and a long panel component, it has been used for a broad range of topics in the literature: among the others, \citet{goos2007lousy} have used NESPD to document job polarization in the UK;  \citet{nickell2003nominal} and \citet{elsby2016wage} have employed this study for analysis of wage rigidities; \citet{adam201935} provide an analysis of valid response rates in NESDP. However, the study of income and earnings dynamics in the UK has not received lots of attention, with a couple of notable exceptions in recent years: \citet{de2021wage} and \citet{bell2021time} are two main references for earnings dynamics in the UK using NESPD for the analysis.

The information in NESPD is of exceptionally high quality. However, it is worth mentioning that not all workers are covered and there are non-negligible non-response issues. 
In particular, data in NESPD are collected in a specified week of April. 
As a result, NESPD might under-sample part-time workers if their weekly earnings fall below the threshold for paying National Insurance and those who moved jobs recently. 
Furthermore, the data set is unbalanced with possibly non-random missing observations. This issue might be problematic for the representativeness of those at the bottom of the distribution and with unstable spells. 
We refer to \citet{bell2021time} and \citet{de2021wage} for a deeper investigation of the data limitations of NESPD.

Following the earnings literature, we further extract a subset of NESPD, which corresponds to employees with a strong attachment to the labor force. 
Specifically, we follow \citet{de2021wage} to use the following selection criteria: we drop the observations below 5\% of the median earnings (roughly \pounds1,300 a year), 
individuals whose total working hours exceed 80 hours per week, and individuals that display negative values in earnings or hourly wages; we do not consider individuals whose hours worked or weekly pay are missing. Earnings are deflated using CPI (2015=100). The earnings measure is the residual obtained by regressing the logarithm of earnings on year and age dummies. 
We extract a portion of data for a pair of years across the period of the Great Recession, 2007-2008. The above-described sample selection leaves 78,531 individuals.

For the purpose of comparisons, we also use the portion of data for the pair, 2015 and 2016, of the most recent survey years.
Note that this period is associated with positive economic growth in the UK.
The sample selection procedure described above leaves 95,906 individuals for this period. 
The results for other years are similar and hence postponed to Appendix \ref{sec:additional_results} for readability.

Apart from age, gender, and income, we could further consider industry and occupation as heterogeneous attributes in the analysis. NESPD contains information on both industry and occupation with the Standard Industry Classification (SIC) and Standard Occupation Classification (SOC) codes.

We define a measure $Y$ as the difference between the log earnings in 2007 and the log earnings in 2008 (i.e., one-year earnings growth rate).
Similarly, we also construct this variable for the period between 2015 and 2016.
Figures \ref{fig:nespd_density_2007} and \ref{fig:nespd_density_2015}  show the kernel density estimates of the measure $Y$ for the period 2007--2008 and the period 2015--2016, respectively.\footnote{The kernel estimates in these figures are obtained with the Gaussian smoothing kernel. Given the concerns raised by Degiannakis et al. (2023b) and Lahr (2014) that, under the existence of extreme values, the visual output of the kernel estimates may vary significantly under different weighting functions, we check various kernel functions (e.g., Epanechnikov, rectangular, and triangular kernels) in Appendix \ref{app: density}. Our analysis reveals no substantial differences among them.}
In each figure, the left and right panels show the densities for men and women, respectively. The figures illustrate clear departures from normality: each kernel density exhibits a large spike in the middle of the distribution sticking upward out of the reference normal density.
Moreover, each kernel density has heavier tails compared to the reference normal density.
These features of the estimated densities suggest that the actual distributions of $Y$ indeed have heavier tails than normal distributions, as documented in the previous literature. 

We also inspect the sample moments of the distribution of one-year earnings changes conditional on previous earnings. 
Figure \ref{fig:nespd_moments} shows the standard deviation, the Kelly's measure of skewness, and the Crow-Siddiqui measure of kurtosis, respectively, of earnings changes by previous earnings, for men (left three figures) and women (right three figures). 
The Kelly's measure of skewness is defined as $(Q_{.9}Y-Q_{.5}Y)-(Q_{.5}Y-Q_{.1}Y)/(Q_{.9}Y-Q_{.1}Y)$ and the Crow-Siddiqui measure of kurtosis as $(Q_{.975}Y-Q_{.025}Y)/(Q_{.75}Y-Q_{.25}Y)$, where $Q_{\tau}Y$ denotes the $\tau$-quantile of the distribution of $Y$.
As in \citet{de2021wage}, we find that the Crow-Siddiqui measure of kurtosis is higher than the reference value of 2.91 for the normal distribution, and the Kelly's measure of skewness deviates from zero, ranging from positive values for low earnings groups to negative values for higher values of previous earnings. 
However, as mentioned in the introduction, these quantile-based measures do not take into account the very tail observations that are above, say $Q_{.975}Y$. 
These observations are indeed informative about the tail feature of the earnings risk and inequality. 
In comparison, our proposed method takes all samples into consideration and is explicitly designed to characterize the tail heaviness. 

In the rest of the paper, we analyze the heaviness of the tails of the conditional distributions of $Y$ given the earnings level and age in the base year (i.e., the base year is 2007 for the period 2007--2008 and it is 2015 for the period 2015--2016).
Specifically, the heaviness of the tails is quantified by the conditional Pareto exponent, and we propose an econometric method of estimation and inference about this measure of tail heaviness.
Its value, in particular, informs whether the $r$-th conditional moments of the income growth exist for $r=2,3,4$, and so on.
If the conditional Pareto exponent, $\alpha(x_0)$, given $X=x_0$ is less than $r$, then the conditional distribution of $Y$ given $X=x_0$ does not have a finite $r$-th moment. 
Section \ref{sec:econometric_method} introduces a method of estimation and inference, and Section \ref{sec:empirical} presents the empirical results.
It turns out that we reject the hypotheses of finite conditional kurtosis, finite conditional skewness, and even finite conditional standard deviation for some subpopulations.

\begin{figure}[h]
	\centering
		\includegraphics[width=0.49\textwidth]{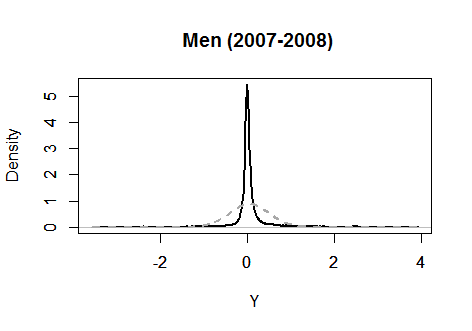}
		\includegraphics[width=0.49\textwidth]{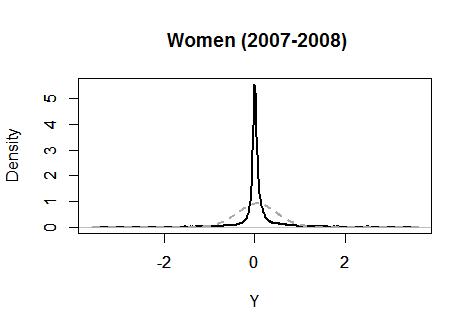}
	\caption{Kernel density estimates (black line) of $Y$, which is defined as a one-year change in log earnings, in 2007 in the NESPD. The left (respectively, right) panel shows the density of men (respectively, women). Also shown in gray dashed lines are the normal density fit to data. Number of individuals: 38,955 men (39,576 women).}
	\label{fig:nespd_density_2007}
\end{figure}

\begin{figure}[h]
	\centering
		\includegraphics[width=0.49\textwidth]{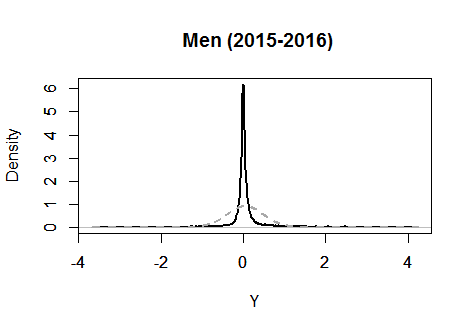}
		\includegraphics[width=0.49\textwidth]{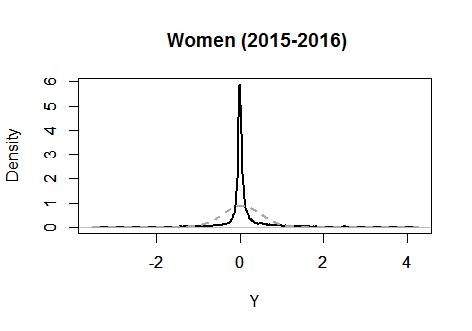}
	\caption{Kernel density estimates (black line) of $Y$, which is defined as a one-year change in log earnings, in 2015 in the NESPD. The left (respectively, right) panel shows the density of men (respectively, women). Also shown in gray dashed lines are the normal density fit to data. Number of individuals: 45,985 men (49,921 women).}
	\label{fig:nespd_density_2015}
\end{figure}

\begin{figure}
	\centering
		\includegraphics[width=0.45\textwidth]{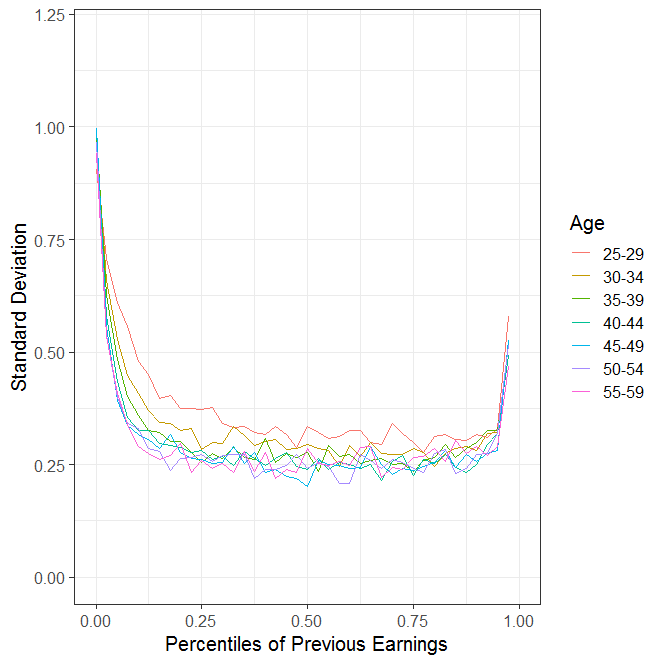}
		\includegraphics[width=0.45\textwidth]{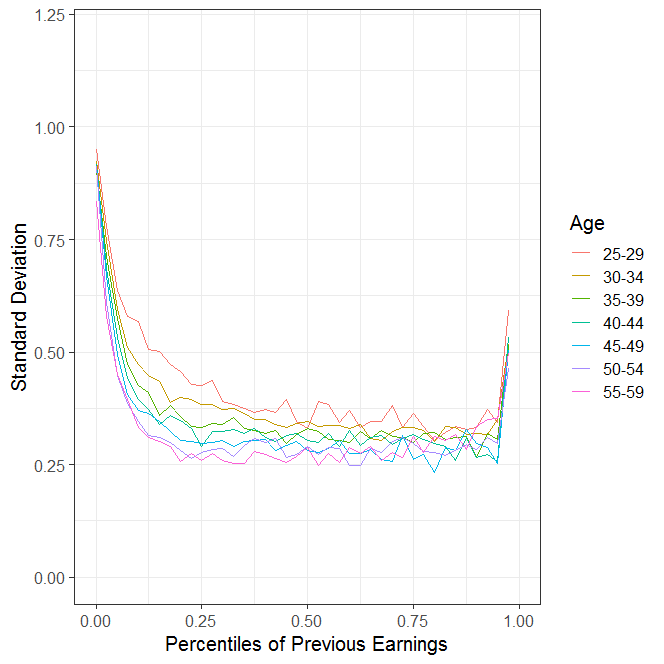}
		\includegraphics[width=0.45\textwidth]{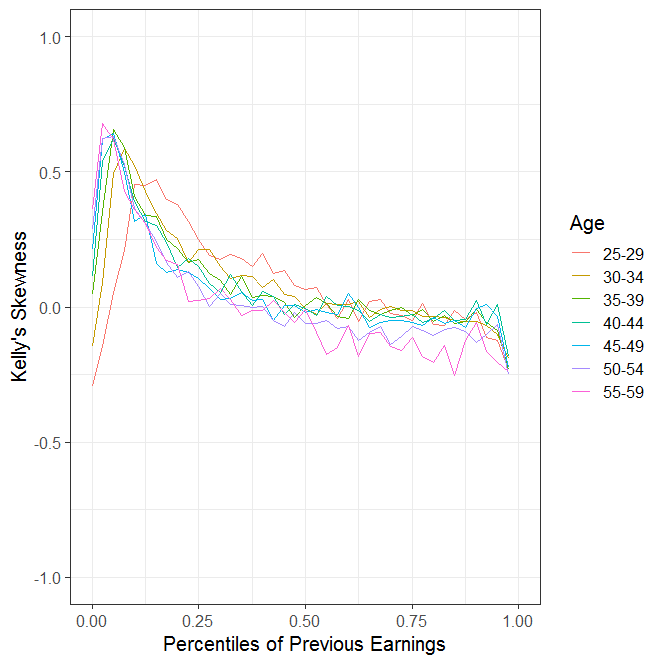}
		\includegraphics[width=0.45\textwidth]{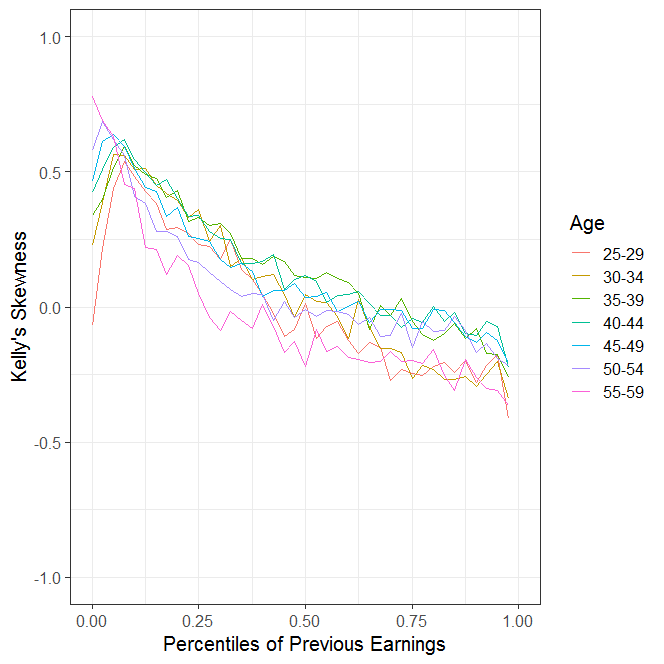}
		\includegraphics[width=0.45\textwidth]{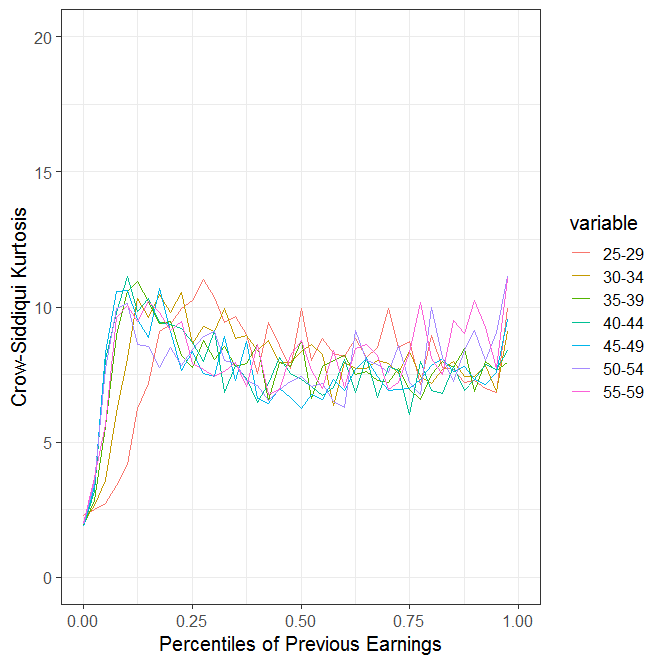}
		\includegraphics[width=0.45\textwidth]{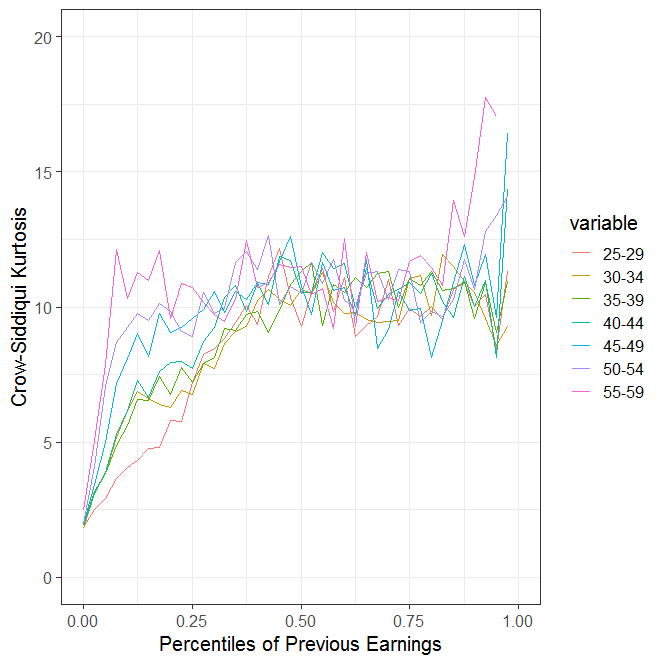}
	\caption{Moments of one-year log earnings changes by previous earnings, for men (left) and women (right) based on the NESPD.}
	\label{fig:nespd_moments}
\end{figure}

\section{Measurement of Conditional Tail Risk\label{sec:econometric}}\label{sec:econometric_method}

We now introduce our proposed measure of conditional tail risk, and present the method of estimation and inference.
Formal theoretical justifications are relegated to the Appendix.
Let $Y$ denote the variable of interest and let $X$ denote a vector of individual characteristics, such as the income level and age in the base year. 
We first assume that the sample is i.i.d. so that the joint distribution $F_{Y,X}\left(\cdot ,\cdot \right)$ of $(Y,X)$ is unique. 

\medskip\noindent
\textbf{Assumption 1:} \textit{The sample $(Y,X)$ is i.i.d.} 
\medskip

In our application to earnings changes, the cross-sectional i.i.d. assumption is typically imposed explicitly or implicitly. While the literature has emphasized the relevance of time dependence with ARCH effects or stochastic volatility, for several questions ranging from consumption insurance to income mobility, \citep*[see][]{meghir2004income}, it is plausible that the i.i.d. assumption is satisfied when it comes to cross-sectional data of earnings changes. 
Also, note that this is a sufficient condition that can be relaxed to allow for some form of weak dependence. 
See Appendix \ref{sec:additional_details} for related econometric methods that allow for time series data.


Consider a rectangular array $\{(Y_{ij},X_{ij}): i \in \{1,\cdots,I\}, j \in \{1, \cdots, J\}\}$. 
Such a data structure can be constructed by randomly splitting a cross-sectional data set of size $N$ into an $I \times J$ array such that $I \cdot J \approx N$, which is what we implement in Section \ref{sec:empirical}.
See Appendix \ref{sec:choice} for how to choose $I$ and $J$ in practice under such a construction of an array.

Our second main assumption is that the conditional distribution $F_{Y|X=x_{0}}\left( \cdot \right) $ is regularly varying at infinity, which is essentially equivalent to
\begin{equation*}
1-F_{Y|X=x_{0}}(y)\propto y^{-\alpha (x_{0})}\mathcal{\ }\text{as }%
y\rightarrow \infty \text{,} 
\end{equation*}
where $\alpha \left( x_{0}\right) >0$ denotes the $x_0$-conditional Pareto exponent that characterizes the tail heaviness of the conditional distribution of $Y$ given $X=x_0$. 
A formal definition of regular variation is as follows. 
	
\medskip\noindent
	\textbf{Assumption 2:} \textit{ The conditional distribution $F_{Y|X=x_{0}}\left( \cdot \right) $ 	is regularly varying at infinity, that is, for all $y>0$
	\begin{equation*}
		\frac{1-F_{Y|X=x_{0}}(yt)} {1-F_{Y|X=x_{0}}(y)}  \rightarrow y^{-\alpha (x_{0})}\mathcal{\ }\text{as }%
		t\rightarrow \infty \text{,} 
	\end{equation*}
	where $\alpha \left( x_{0}\right) >0$ denotes the $x_0$-conditional Pareto exponent that characterizes the tail heaviness of the conditional distribution of $Y$ given $X=x_0$.  }
\medskip
	
We again emphasize that we do \textit{not} impose the assumption that data follow a Pareto distribution.
Instead, this assumption only requires the regularly varying (RV) tail condition. This RV tail condition is mild and satisfied by many commonly used distributions, such as Student-t, F, and gamma distributions among numerous others, as well as nonparametric distributions. 
Moreover, it has been widely documented that the earnings data sets exhibit a Pareto-type (i.e., RV) tail and hence support this condition \citep*[cf.][]{Gabaix2009, gabaix2016power, guvenen2021data}. See a survey by \citet*{gabaix2016power} and \citet*{guvenen2022global} for additional empirical evidence that supports our assumption. From this existing literature, the assumption of the existence of extreme events is legitimate for the distribution of income and earnings growth. In line with these findings, our data provide suggestive evidence that extremes can occur and, thus, further validate our assumption, as shown in the densities in Figures \ref{fig:nespd_density_2007}--\ref{fig:nespd_density_2015}, and in the Pareto plots in Figures \ref{fig:Paretoplot_cond}--\ref{fig:pareto_plot_K30}.
See Appendix \ref{sec:primitive_conditions} for more discussions and primitive conditions.

With the parameter $\alpha\left(x_{0}\right)$, we can characterize the existence of the conditional moments as follows.
For any $r\in \mathbb{R}^{+}$, 
\begin{align*}
\mathbb{E}\left[ Y_{ij}^{r}|X_{ij}=x_{0}\right] <&\infty \text{ if }\alpha
(x_{0})>r 
\qquad\text{and}\\
\mathbb{E}\left[ Y_{ij}^{r}|X_{ij}=x_{0}\right] =&\infty \text{ if }\alpha
(x_{0})<r.
\end{align*}
Thus, the test of finite conditional $r$-th moment can be represented by the competing hypotheses
\begin{equation}
H_{0}:\alpha (x_{0})>r\text{ against }H_{1}:\alpha (x_{0})\leq r.
\label{hypo}
\end{equation}
In particular, we set $r=2$, $3$, and $4$ for tests of the existence of the standard deviation, skewness, and kurtosis, respectively.

To construct a feasible test for \eqref{hypo}, we need to obtain a random sample from the conditional distribution $F_{Y|X=x_{0}}\left( \cdot \right)$. 
This would be straightforward if $X$ is discrete. 
However, random sampling from such a conditional distribution is infeasible when $X$ includes at least one continuous random variable, such
as the income level, which we use in our data analysis. 
To overcome this issue, we propose the following procedure to extract the local sample. 
Let $\left\vert \left\vert \cdot \right\vert \right\vert $ denote the Euclidean norm.

\begin{enumerate}
\item For each $i$, find the nearest neighbor (NN) of $\{X_{ij}\}_{j=1}^{J}$
to $x_{0}$ and the induced $Y_{ij}$ associated with the NN, that is $%
Y_{ij}=Y_{ij^{\ast }}$ where $\left\vert \left\vert X_{ij^{\ast
}}-x_{0}\right\vert \right\vert =\min_{j\in \{1,...,J\}}\left\vert
\left\vert X_{ij}-x_{0}\right\vert \right\vert$. Denote the induced $Y$ by $%
\{Y_{i,[x_{0}]}\}_{i=1}^{I}$.

\item Sort $\{Y_{i,[x_{0}]}\}_{i=1}^{I}$ descendingly as $%
\{Y_{(1),[x_{0}]}\geq Y_{(2),[x_{0}]}\geq ,...,\geq Y_{(I),[x_{0}]}\}$.
Collect the largest $K+1$ of them as 
\begin{equation*}
\mathbf{Y}(x_{0})=\{Y_{(1),[x_{0}]},Y_{(2),[x_{0}]},...,Y_{(K+1),[x_{0}]}\}.
\end{equation*}

\item Use $\mathbf{Y(}x_{0}\mathbf{)}$ to estimate the Pareto exponent $\alpha\left(x_{0}\right)$ by
the formla in \eqref{eq:hill} below.
\end{enumerate}

We postpone until Appendix \ref{sec:primitive_conditions} formal discussions of conditions under which this procedure works and why it works in theory.
Here, we instead discuss its intuition.
First, for each $i$, we select the NN among $\{X_{ij}\}_{j=1}^{J}$ to $x_{0}$. 
When $J$ (as well as $I$) is large enough, such a NN gets close enough to $x_{0}$ and therefore its induced value $Y_{i,[x_{0}]}$ behaves as if it were generated from $F_{Y|X=x_{0}}$. 
Second, given the i.i.d. assumption, the subsample $\{Y_{i,[x_{0}]}\}_{i=1}^{I}$ collected from all individuals serves as a random sample \textit{approximately} generated from $F_{Y|X=x_{0}}$. 
We can thus use the extreme order statistics of $\{Y_{i,[x_{0}]}\}_{i=1}^{I}$ to estimate the Pareto exponent $\alpha\left(x_{0}\right)$ by using one of the many existing methods.
In particular, if we apply the estimator of \citet{Hill1975} to this NN sample $\{Y_{i,[x_{0}]}\}_{i=1}^{I}$, then we get an estimator
\begin{equation}
\hat{\alpha}(x_{0})=\left[ \frac{1}{K}\sum_{k=1}^{K}\{\log
(Y_{(k),[x_{0}]})-\log \left( Y_{(K+1),[x_{0}]}\right) \}\right] ^{-1},
\label{eq:hill}
\end{equation}
of the conditional Pareto exponent $\alpha(x_0)$.
See Appendix \ref{sec:choice} for how to choose $K$ in practice.

We show in Appendix \ref{sec:primitive_conditions} that this estimator asymptotically follows the normal distribution as
\begin{equation}\label{eq:asymptotic1}
\sqrt{K}\left( \hat{\alpha}(x_{0})-\alpha (x_{0})\right) \overset{d}{\rightarrow }\mathcal{N}\left( 0,\alpha (x_{0})^{2}\right) 
\end{equation}
under suitable conditions.
Moreover, for any $x_{0}\neq x_{1}$, we also have the joint asymptotic normality:
\begin{equation}\label{eq:asymptotic2}
\sqrt{K}\binom{\hat{\alpha}(x_{0})-\alpha (x_{0})}{\hat{\alpha}(x_{1})-\alpha (x_{1})}\overset{d}{\rightarrow }\mathcal{N}\left( 0,\left( \begin{array}{cc}
\alpha (x_{0})^{2} & 0 \\ 
0 & \alpha (x_{1})^{2}
\end{array}
\right) \right) .
\end{equation}
We provide a formal statement of these results as Proposition \ref{prop:main} in Appendix \ref{sec:primitive_conditions}.

Given \eqref{eq:asymptotic1} and \eqref{eq:asymptotic2}, we can construct the standard t and F tests for our hypothesis testing problem (\ref{hypo}). 
Specifically, we reject the (one-sided) null hypothesis at the 5\% nominal level if 
\begin{equation}
\sqrt{K}{\;}\frac{\hat{\alpha}(x_{0})-r}{\hat{\alpha}(x_{0})}<\Phi^{-1}(0.05),  \label{test value}
\end{equation}
where $\Phi ^{-1}\left( \cdot \right) $ denotes the quantile function of the standard normal distribution and $r=2$, $3$, and $4$. 
Moreover, we reject the null hypothesis that $\alpha (x_{0})=\alpha (x_{1})$ (against the two-sided alternative) at the 5\% nominal level if
\begin{equation}
\sqrt{K}\frac{\left\vert \hat{\alpha}(x_{0})-\hat{\alpha}(x_{1})\right\vert }{\sqrt{\hat{\alpha}(x_{0})^{2}+\hat{\alpha}(x_{1})^{2}}}>\Phi^{-1}(1-0.025).  \label{test equal}
\end{equation}
Appendix \ref{sec:simulation} contains a simulation study, which supports our theoretical results.

We end this section with discussions about the related econometrics and statistics literature. 
To estimate the \textit{unconditional} Pareto exponent of some underlying distribution, researchers have developed numerous methods, including the popular and widely used \citeauthor{Hill1975}'s (\citeyear{Hill1975}) estimator,\footnote{This method fits large values in the sample to the Pareto distribution and implements the maximum likelihood estimation.} and those proposed by \citet{Smith1987MLE} and \citet{Gabaix2011rank}. 
See \citet{Fedotenkov2020} for a recent review of other methods. 
Primitive conditions and asymptotic properties of these estimators are typically studied by using extreme value theory and the Pareto tail approximation \citep[e.g.,][]{Hall1982,HallWelsh1985}. 
We refer to \citet{deHaan2007book} for a comprehensive review.
In recent work, \citet{beare2017determination} show that the cumulative sum of Markov multiplicative processes generates an approximate Pareto tail. This result supports our theoretical conditions and fits our empirical setup. 
In general, a pitfall of these methods in our framework is that the unconditional Pareto exponent cannot characterize the effect of important demographic characteristics on growth risk. 
This issue motivates us to study the conditional Pareto exponent. 

Unlike the unconditional Pareto exponent of, say $Y$, estimating its \textit{conditional} counterpart given some other variable $X$ has received less attention.
This is technically more challenging when $X$ contains continuous random variables as one could not obtain a random sample from the conditional distribution of interest, say $F_{Y|X=x_0}$ for some pre-specified value $x_0$.   
\citet{Wang2009} and \citet{WangLi2013} impose some parametric assumptions such that the Pareto exponent is a single index function of $X$.
\citet{Gardes2008} and \citet{Gardes2012} develop fully nonparametric methods based on local smoothing.
Estimation and inference based on these local smoothing methods sensitively rely on the bandwidth parameter unlike our proposed method based on nearest neighbors.
Furthermore, these existing methods underperform compared to ours in terms of bias and standard deviation, as demonstrated through simulation studies presented in Appendix \ref{sec:simulation}.
For these reasons, we suggest using our novel method proposed above.


Finally, in two related papers, \citet{trapani2016testing} and \citet{degiannakis2022} propose randomized testing procedures to test the finiteness of moments of a random variable or the residual of the linear regression.
Unlike their proposal, we test the existence of moments of the conditional distribution of a random variable given some other variables. 
To see the difference, consider, for example, the model that $Y = \mu(X) + \sigma(X)U$ where $\mu(X)$ and $\sigma(X)$ are some functions of $X$ that characterize the location and scale of $Y$ respectively.
One can estimate these two functions and further estimate the tail heaviness of $Y$ by that of the residual $\hat{U}$. 
But such a model by construction assumes that the tail heaviness of $Y$ remains unchanged across $X$ \citep*[cf.][]{WangLi2013}. 
This feature is neither sufficient for our purpose of measuring the tail heaviness of earnings growth as a function of demographic characteristics nor coherent with the empirical findings.\footnote{The tests developed by \citet{trapani2016testing} and \citet{degiannakis2022} can be generalized to conditional distributions. 
When $X$ is discrete, we can condition on specific values of $X$ by taking the subsample. 
Otherwise, we can select the subsample within a local neighborhood of a certain query value $X=x$. 
We leave this for future research.} 
Also related is the paper by \citet{sasaki2022fixed}, who propose inference for the tail index of conditional distributions.
Although allowing the tail index to depend on $X$, their inference method builds on the fixed-$k$ asymptotics and cannot be easily extended for estimation. 
In our context, estimation is valuable and thereby, we can propose a novel measure of earnings risk based on the estimate of the tail index unlike any of these existing papers. 
As such, our proposal is suitable for studying the tail features of the conditional distribution of income growth and relates these to the results in the income literature.   

\begin{remark}
We propose to use the estimate of the Pareto exponent of the conditional distribution as a novel measure of tail earnings risk given the conditioning variables, $X$. 
Our risk measure does `not' depend on $K$ (or any specific quantiles), but depends only on the underlying distribution (and the conditional value $x_0$), like the existing measures in the literature. 
In particular, the Pareto exponent is uniquely determined by the underlying distribution. See, for example, \citet{Gabaix2016} for some commonly used distributions and their corresponding Pareto exponents. 
In contrast, $K$ is a tuning parameter, which is close in spirit to the bandwidth in kernel regression. 
\end{remark}

\section{Results}\label{sec:empirical}

\subsection{Benchmark Sample}\label{sec:results_admin}

Applying the method introduced in Section \ref{sec:econometric_method} (and also in more detail in Appendix \ref{sec:additional_details}) to NESPD described in Section \ref{sec:admin}, we analyze the conditional tail risk of earnings of adult individuals in the United Kingdom.
We define $Y$ by the absolute difference between the log earnings in 2007 and the log earnings in 2008 for our baseline analysis.
For the conditioning variables $X$, we include the quantile of earnings level and the age of the individual in the base year (2007), following \citet{guvenen2021data}.
With this setting, we study the conditional Pareto exponent $\alpha(x_0)$ for each point $x_0$ of earnings levels from $\{0.05,0.10,\cdots,0.90,0.95\}$ (in quantile) and ages from $\{30,40,50\}$ for each of men and women.

Figure \ref{fig:nespd_2007} illustrates the estimates of the conditional Pareto exponents $\alpha(x_0)$ (in black lines) along with the upper bounds of their one-sided 95\% confidence intervals (in gray lines) for 30-, 40- and 50-year-old individuals. 
The left (respectively, right) panel shows the results for men (respectively, women) in each figure, which suggests different patterns of heterogeneous earnings risk across age, base-year earnings, and gender groups.

\begin{figure}
	\centering
		\includegraphics[width=0.49\textwidth]{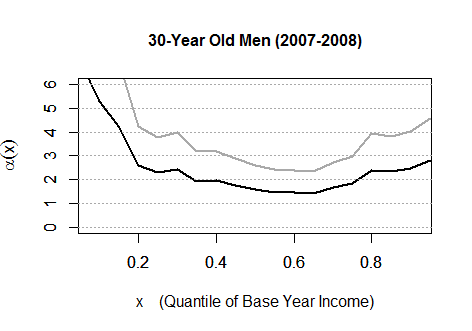}
		\includegraphics[width=0.49\textwidth]{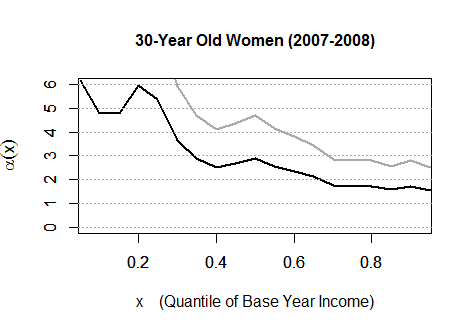}
		\includegraphics[width=0.49\textwidth]{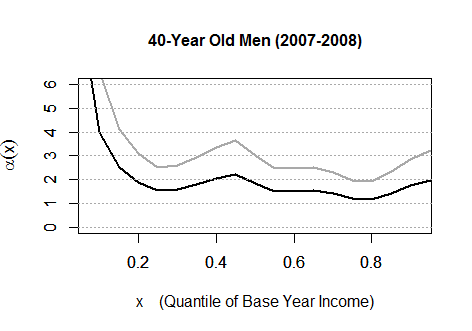}
		\includegraphics[width=0.49\textwidth]{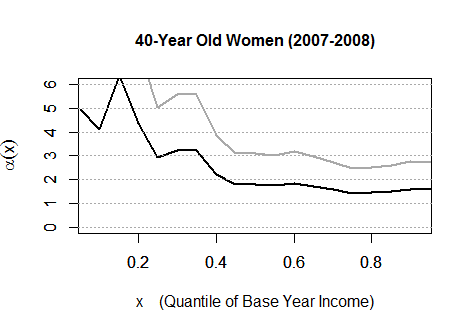}
		\includegraphics[width=0.49\textwidth]{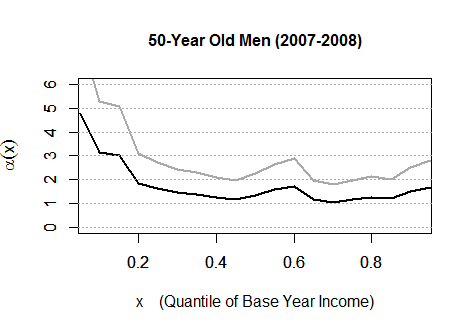}
		\includegraphics[width=0.49\textwidth]{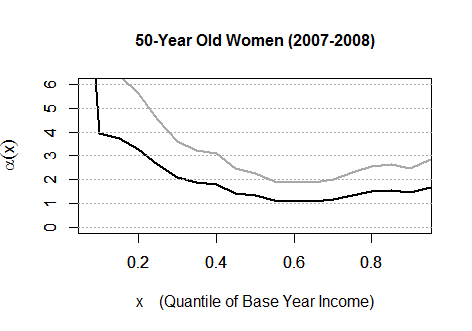}
	\caption{Estimates (black lines) and the one-sided 95\% confidence intervals (gray lines) of the Pareto exponents $\alpha(x_0)$ of the conditional tail risk for men (left) and women (right) based on the NESPD in the period 2007--2008. The left (respectively, right) column shows the results for men (respectively, women). The top, middle, and bottom panels show the results for 30-, 40- and 50-year-old individuals.}
	\label{fig:nespd_2007}
\end{figure}

%
%
%

We describe the empirical findings in the order of age and gender. 
For 30-year-old men (the top left panel in Figure \ref{fig:nespd_2007}), the conditional Pareto exponents (in point estimates) range from 1.4 to 7.0, and the upper bounds of the one-sided 95\% confidence intervals range from 2.3 to 11.4.
Given any quantile of earnings received in 2007, except the very top and bottom quantiles, the conditional Pareto exponent is significantly less than four, implying that the conditional kurtosis of earnings growth does not exist for these middle-earnings groups of young men.
Overall, the kurtosis barely exists for most of the base-year earnings levels even if we fail to reject the hypothesis of finite kurtosis.
Furthermore, given that earnings received in 2007 were between the 45th and the 75th percentile, the Pareto exponent is significantly less than three, implying that even the conditional skewness does not exist.
For 30-year-old women (the top right panel in Figure \ref{fig:nespd_2007}), the conditional Pareto exponents (in point estimates) range from 1.5 to 6.1, and the upper bounds of the one-sided 95\% confidence intervals range from 2.5 to 10.0.
For this subpopulation, the conditional Pareto exponent is significantly less than four for high- and middle-high-earnings groups of women, those above the 60th percentile of the earnings distribution in 2007, implying that the conditional kurtosis does not exist. 
Furthermore, given that earnings received in 2007 were at or above the 70th percentile, the Pareto exponent is significantly less than three, implying that even the conditional skewness does not exist.
Comparing the results between men and women at age 30, we observe that women were more vulnerable to earnings risk than men at the top quantiles of the base-year income level, above the 75th percentile.

For 40-year-old men (the middle left panel in Figure \ref{fig:nespd_2007}), the conditional Pareto exponents (in point estimates) range from 1.2 to 9.1, and the upper bounds of the one-sided 95\% confidence intervals range from 1.9 to 14.9.
Remarkably, the earnings risks of 40-year-old men are higher than those of 30-year-old men.
For this age group of men, we reject the hypothesis of finite kurtosis at any level of base-year earnings, apart from the bottom quantiles, below the 15th percentile.
Furthermore, given that earnings received in 2007 were between the 20th and the 35th percentile and above or at the median excluding the very top quantile (between the 75th and 80th percentile), the Pareto exponent is significantly less than three (two), implying that even the conditional skewness (respectively, standard deviation) does not exist.
For 40-year-old women (the middle right panel in Figure \ref{fig:nespd_2007}), the conditional Pareto exponents (in point estimates) range from 1.4 to 6.4, and the upper bounds of the one-sided 95\% confidence intervals range from 2.4 to 11.0.
For this age group of women, we reject the hypothesis of finite kurtosis at any earnings level above the 40th percentile.
Furthermore, given that earnings received in 2007 were at or above the 65th percentile, the Pareto exponent is significantly less than three, implying that even the conditional skewness does not exist.
Comparing the results between men and women at age 40, we observe that men are almost as vulnerable to earnings risk as women for this middle age group, with some difference at the bottom quantiles.

For 50-year-old men (the bottom left panel in Figure \ref{fig:nespd_2007}), the conditional Pareto exponents (in point estimates) range from 1.1 to 4.8, and the upper bounds of the one-sided 95\% confidence intervals range from 1.8 to 8.0.
For this age group of men, we reject the hypothesis of finite kurtosis and finite skewness at any level of base-year earnings, apart from the bottom quantiles, below the 20th percentile. 
Moreover, we reject the hypothesis of finite standard deviation at the 45th percentile and between the 65th and the 75th percentile.
For 50-year-old women (the bottom right panel in Figure \ref{fig:nespd_2007}), the estimated conditional Pareto exponents range from 1.1 to 15.0, and the upper bounds of the one-sided 95\% confidence intervals range from 1.9 to 25.7.
For this age group of women, we reject the hypothesis of finite kurtosis at any level of base-year earnings, apart from the bottom quantiles, below the 30th percentile. 
Moreover,  we reject the hypothesis of finite skewness (standard deviation) at any level of base-year earnings above the 40th percentile (between the 55th and the 70th percentile).

The tuning parameter $K$ is to be selected by a data-driven method (cf. Appendix \ref{sec:choice}) and should not be manipulated by a researcher. Yet, it may be of interest to see the sensitivity of the results to variations in $K$. Table \ref{tab:conditional_robK_50_07} reports the estimates of the conditional Pareto exponent, $\alpha \left(x_0\right)$, for several values of $K$ for 50-year old men in 2007--2008, conditional on the quantiles of base year income. Our findings are robust across different values of $K$: estimates tend to be higher at lower quantiles of the base year earnings distribution, compared to those at the higher end.\footnote{We obtain similar findings in terms of robustness to $K$ of the Pareto estimates obtained in the following sections for other years or values of the conditioning variables. Results are available from the authors upon request.}
\begin{table}
	\centering
	\begin{tabular}{rcccccc}
		\hline\hline
		\multicolumn{7}{c}{50-Year Old Men (2007--2008)} \\
		\multicolumn{1}{l}{$N=197$} & \multicolumn{6}{c}{$K$}\\
		\cline{2-7}
		\multicolumn{1}{c}{Percentile Lag $Y$ } &$K^* = 7$ & 0.10N & 0.05N & 0.04N & 0.03N  & 0.02N  \\\hline
		1 &3.864&2.820&2.839&3.864&3.364&2.797\\
		2&3.204&2.520&3.568&3.204&4.801&3.707\\
		3&3.367&1.428&2.232&3.367&3.414&4.200\\
		4&1.373&1.105&1.556&1.373&1.076&1.057\\
		5&1.089&1.065&1.065&1.089&1.219&1.257\\
		6&1.171&1.532&1.410&1.171&1.641&1.984\\
		7&0.955&1.148&0.953&0.955&1.385&0.998\\
		8&0.954&1.086&0.835&0.954&1.662&2.168\\
		9&1.073&1.109&0.857&1.073&2.257&1.817\\
		10&1.837&1.305&1.413&1.837&1.578&0.997\\
		11&3.201&1.739&2.130&3.201&2.659&8.015\\
		12&1.419&1.962&1.531&1.419&1.424&6.588\\
		13&0.936&1.535&1.009&0.936&0.713&0.899\\
		14&0.662&1.197&0.744&0.662&0.630&1.946\\
		15&0.786&1.222&0.908&0.786&0.782&1.945\\
		16&0.967&1.143&1.006&0.967&0.824&2.487\\
		17&1.271&1.221&1.083&1.271&1.023&2.534\\
		18&1.047&1.330&1.081&1.047&1.216&2.337\\
		19&1.330&1.903&1.541&1.330&1.515&1.243\\
		\hline
		\hline\hline
	\end{tabular}
	\caption{Sensitivity to $K$ of the estimates of the Pareto exponent for 50-year old men in 2007-2008 conditional on the quantiles of base year income.}
	\label{tab:conditional_robK_50_07}
\end{table}

%
%

\subsection{Period of Positive Growth}\label{sec:robustness_expansion}

While our main focus has been on the period of the great recession between 2007 and 2008, we next look at the period of positive growth, between 2015 and 2016.
Figure \ref{fig:nespd_2015} illustrates the results for earnings growth between 2015 and 2016 for 30-, 40- and 50-year-old individuals. 
These results share similar qualitative patterns to those reported in Figure \ref{fig:nespd_2007}. 
First, 30-year-old men at high quantiles of base-year earnings enjoy less earnings risk.
Second, 40-year-old men have an overall higher earnings risk than 30-year-old men.
Lastly, and most remarkably, for men the earnings risk is not necessarily lower in the period 2015--2016 than in the period 2007--2008, even though the former period enjoyed positive GDP growth (2.2\% in 2015 and 1.8\% in 2016) and the latter period suffered from negative GDP growth (0.7\% in 2007 and $-$4.4\% in 2008).
Women, instead, are less vulnerable to earnings risk in the period 2015--2016 than in the period 2007--2008.
Overall, there are more similarities than differences despite the contrast between a recession and a positive growth in the UK economy.

\begin{figure}
	\centering
		\includegraphics[width=0.49\textwidth]{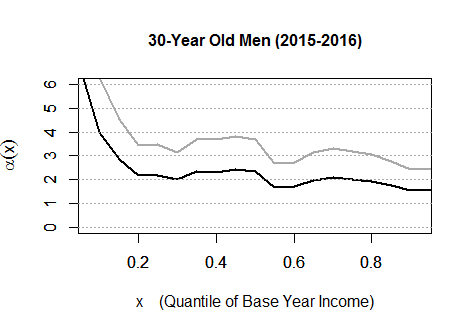}
		\includegraphics[width=0.49\textwidth]{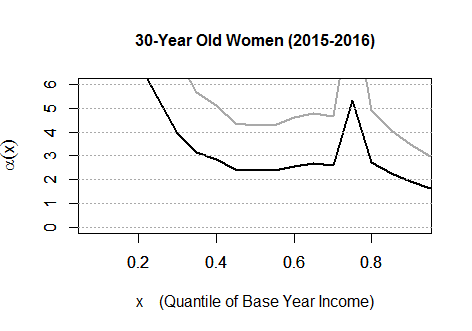}
		\includegraphics[width=0.49\textwidth]{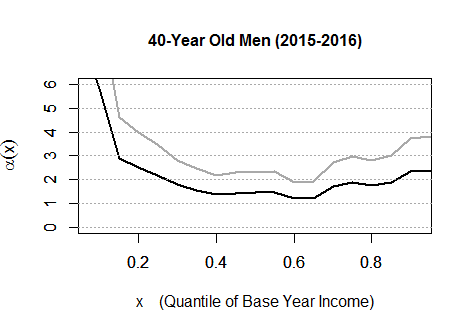}
		\includegraphics[width=0.49\textwidth]{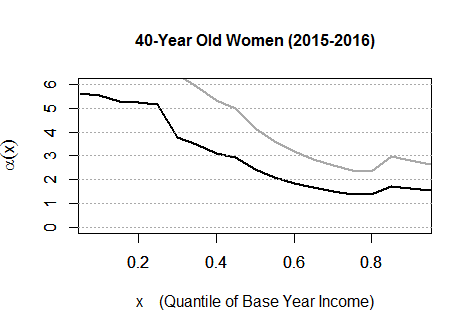}
		\includegraphics[width=0.49\textwidth]{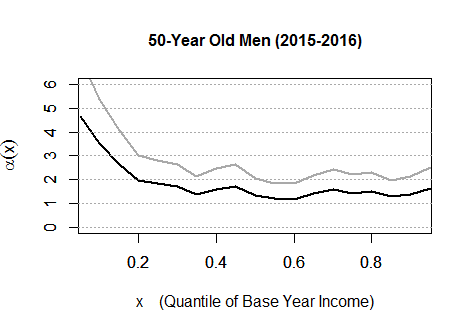}
		\includegraphics[width=0.49\textwidth]{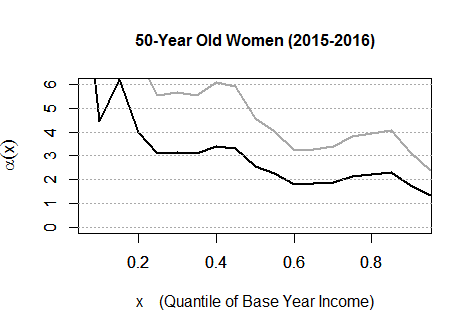}
	\caption{Estimates (black lines) and the one-sided 95\% confidence intervals (gray lines) of the Pareto exponents $\alpha(x_0)$ of the conditional tail risk for men (left) and women (right) based on the NESPD in the period 2015--2016. The left (respectively, right) column shows the results for men (respectively, women). The top, middle, and bottom panels show results for 30-, 40- and 50-year-old individuals.}
	\label{fig:nespd_2015}
\end{figure}

In summary, we find that:
1) the kurtosis, skewness, and even standard deviation may not exist for the conditional distribution of earnings growth given certain attributes (age, gender, and earnings);
2) 40- and 50-year-old workers have overall higher earnings risk than 30-year-old workers and we thus document that tail earnings risk is increasing over the life cycle;  and
3) these patterns appear both in the period 2007--2008 of great recession and the period 2015--2016 of positive growth for men, while there are some differences for women.
Finally, we remark that, although we use the two periods, 2007--2008 and 2015--2016, to draw the above conclusion, we also present additional empirical results for other periods between 2005 and 2016 in Appendix \ref{sec:additional_results} to demonstrate robustness. 
In Appendix \ref{sec:psid}, we also apply our proposed method to the US PSID and obtain similar empirical findings on the tail heaviness of the conditional distribution of income risk. 


\subsection{Job Stayers}\label{sec:robustness_stayers}

We also experiment with a different set of selection criteria: we restrict the sample to workers who did not change their jobs in the last twelve months. We refer to them as job stayers. 
Figures \ref{fig:nespd_2007_samejob} and \ref{fig:nespd_2015_samejob} illustrate estimates of the conditional Pareto exponents $\alpha(x_0)$ (in black lines) along with the upper bounds of their one-sided 95\% confidence intervals (in gray lines) for the two pairs of years: 2007--2008 and 2015--2016, with the restricted sample of job stayers. 
Figures \ref{fig:nespd_2007_samejob} and \ref{fig:nespd_2015_samejob} show that, for almost all attributes, conditional kurtosis does not exist at any level of baseline earnings. 

This finding is in line with the US evidence documented by \citet{guvenen2021data} that job stayers experience different earnings dynamics and, in particular, face earnings innovations that are more leptokurtic, especially at the bottom of the earnings distribution. 

More specifically, we find that, at low earnings levels, conditional kurtosis does not exist, in contrast to what we observe for the whole population. Excluding middle-aged men at top quantiles of earnings, we reject the hypothesis of finite conditional kurtosis for job-stayers at all earnings levels.

\begin{figure}
	\centering
		\includegraphics[width=0.49\textwidth]{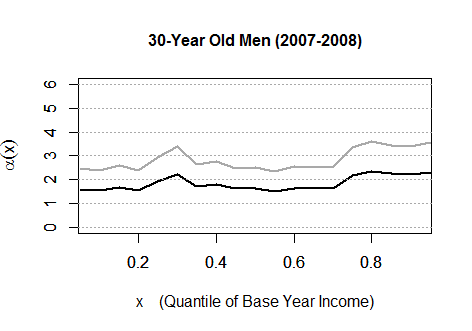}
		\includegraphics[width=0.49\textwidth]{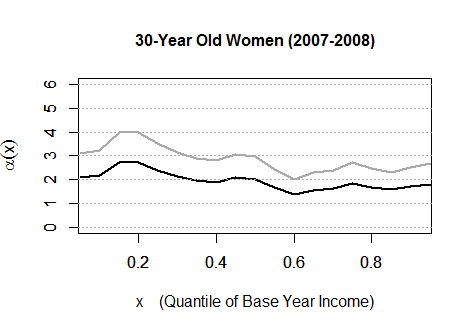}
		\includegraphics[width=0.49\textwidth]{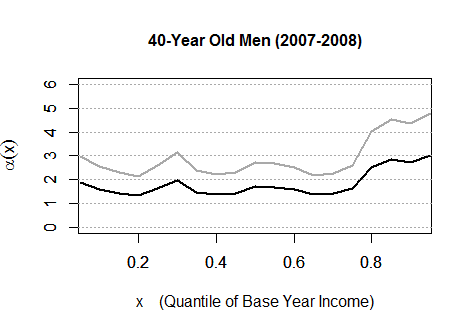}
		\includegraphics[width=0.49\textwidth]{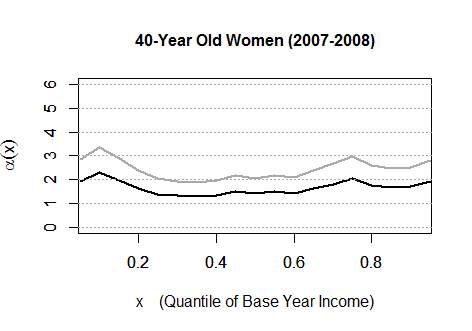}
		\includegraphics[width=0.49\textwidth]{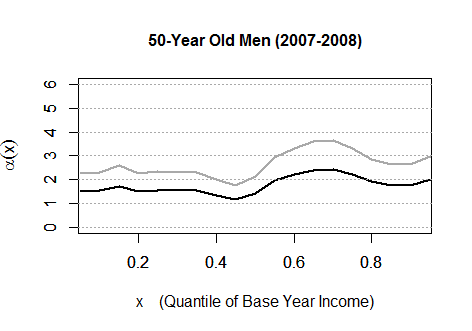}
		\includegraphics[width=0.49\textwidth]{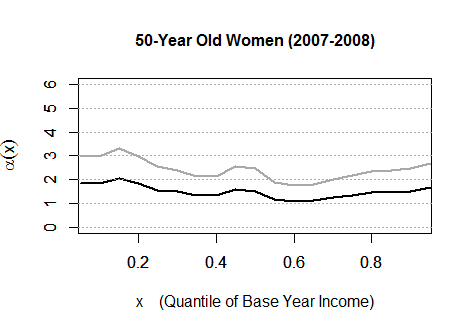}
	\caption{Estimates (black lines) and the one-sided 95\% confidence intervals (gray lines) of the Pareto exponents $\alpha(x_0)$ of the conditional tail risk for men (left) and women (right) based on the NESPD in the period 2007--2008, for job-stayers. The left (respectively, right) column shows the results for men (respectively, women). The top, middle, and bottom panels show results for 30-, 40- and 50-year-old individuals. Number of individuals: 32,411 men (32,509 women).}
	\label{fig:nespd_2007_samejob}
\end{figure}

\begin{figure}
	\centering
		\includegraphics[width=0.49\textwidth]{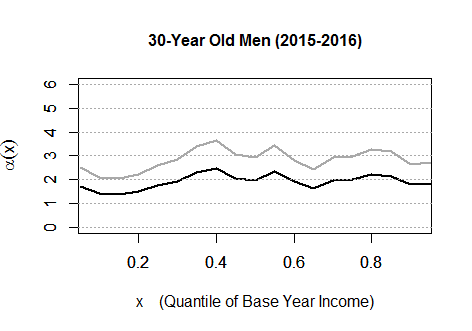}
		\includegraphics[width=0.49\textwidth]{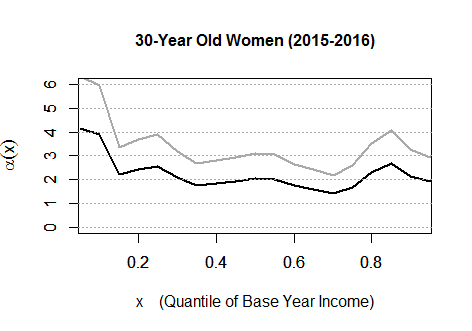}
		\includegraphics[width=0.49\textwidth]{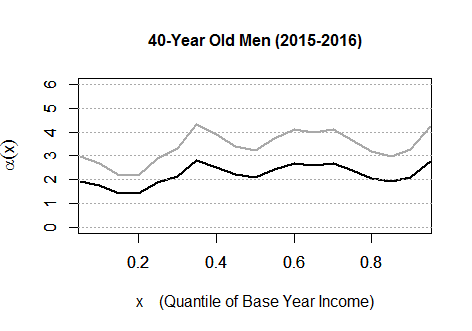}
		\includegraphics[width=0.49\textwidth]{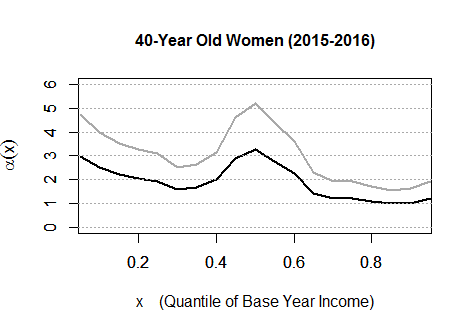}
		\includegraphics[width=0.49\textwidth]{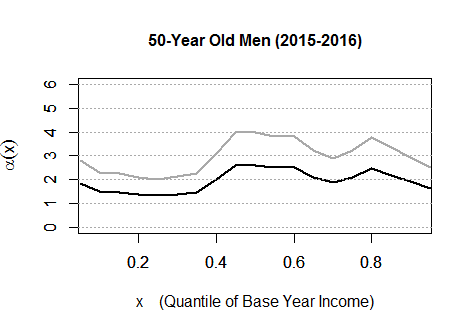}
		\includegraphics[width=0.49\textwidth]{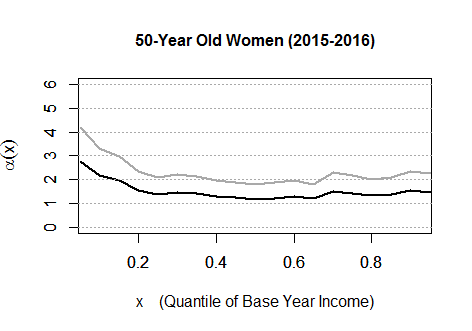}
	\caption{Estimates (black lines) and the one-sided 95\% confidence intervals (gray lines) of the Pareto exponents $\alpha(x_0)$ of the conditional tail risk for men (left) and women (right) based on the NESPD in the period 2015--2016, for job-stayers. The left (respectively, right) column shows the results for men (respectively, women). The top, middle, and bottom panels show results for 30-, 40- and 50-year-old individuals. Number of individuals: 38,680 men (41,393 women).}
	\label{fig:nespd_2015_samejob}
\end{figure}

\subsection{Pareto Tail Fit}\label{sec:Pareto}
In this section, we would like to comment more on Assumption 2 by providing further information about the upper tail of the conditional distribution of earnings changes.
To this goal, we present the plots of quantiles in Figure \ref{fig:Paretoplot_cond}. These plots illustrate the goodness of the Pareto fit to the data at the tail. In particular, we plot the ratio $(Y_{(1)}/Y_{(K)}, ...,Y_{(K-1)}/Y_{(K)})$, with $K = 0.3N$ for the empirical quantiles (solid blue lines) against the Pareto-fitted ones (dashed black lines), obtained as $t^{-1/{\hat{\alpha}(x_0)}}$ for $t \in [0,1]$. 
Here $Y_{(i)}$ denotes the $i$-th largest observation among all $Y_i$'s in the subsample and $\hat{\alpha}(x_0)$ denotes the classic Hill's estimator. 
The figure shows a very good Pareto fit at the tail, providing suggestive evidence that the tail follows a power law behavior. 
The evidence in the data also suggests that the largest observations are far away from each other and do not converge to any finite constant. These features imply that a truncation or a finite upper bound is not coherent with our data set. There are sufficient reasons to care about extreme events in untruncated distributions in this application. 

\begin{figure}
	\centering
	\begin{tabular}{@{}c@{}c@{}c@{}}
		\multicolumn{3}{c}{Median Base Year Income - Men (2007--2008)} \\
				& & \\
		30-Year Old & 40-Year Old& 50-Year Old \\
		\includegraphics[width=0.3\textwidth]{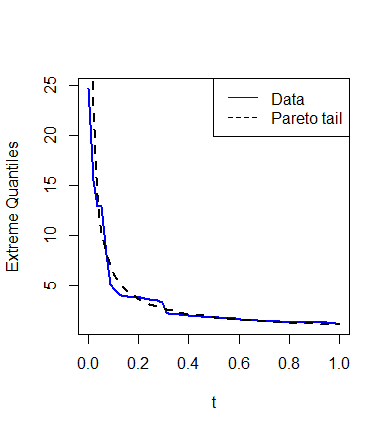}&
		\includegraphics[width=0.3\textwidth]{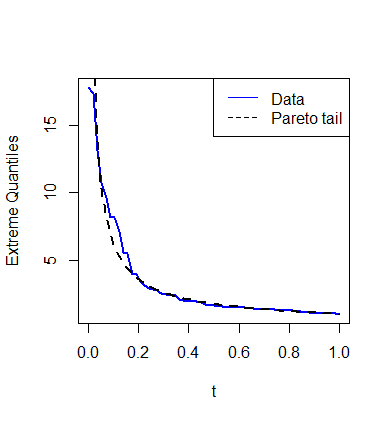}&
		\includegraphics[width=0.3\textwidth]{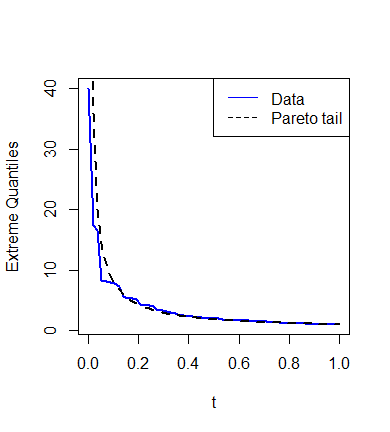} \\
	\end{tabular}
	\caption{Pareto plots for 30-, 40-, 50-year old men in 2007-2008, conditional on being in the median of the distribution of base year earnings, when using the largest $K=0.3N$ tail observations.}
	\label{fig:Paretoplot_cond} 
\end{figure}

\subsection{Unconditional Tail Risk}\label{sec:unconditional}
Thus far, our analyses have focused on conditional distributions of earnings changes following the literature.
In this section, we conduct the analysis for the unconditional distribution of earnings changes. Table \ref{tab:unconditional_estimates} reports the estimates of the Pareto exponents $\alpha$ for men and women, respectively, for the years 2007--2008 and 2015--2016. Table \ref{tab:unconditional_estimates} further shows the sensitivity of the results to different choices of the order statistics $K$ used to build the Hill's estimator. 

\begin{table}
	\centering
	\begin{tabular}{rcccc}
	\hline\hline
	\multicolumn{5}{c}{Men (2007--2008)} \\
	\multicolumn{1}{l}{$N=38955$} & \multicolumn{4}{c}{$K$}\\
	\cline{2-5}
	\multicolumn{1}{l}{ } & 0.04N & 0.025N & $K^* = 780$ & 0.02N  \\\hline
	alpha          &  2.280 & 3.129 & 3.714 & 3.712  \\
	SD             & .058 & .100 & .133 & .133  \\
	\hline
	\multicolumn{5}{c}{Women (2007--2008)} \\
	\multicolumn{1}{l}{$N=39576$} & \multicolumn{4}{c}{$K$}\\
	\cline{2-5}
	\multicolumn{1}{l}{ } & 0.04N & 0.025N & $K^* = 792$ & 0.02N  \\\hline
	alpha          & 2.446 & 3.158 & 3.549 & 3.561 \\
	SD             & .061 & .100 & .126 & .127  \\
	\hline
	\multicolumn{5}{c}{Men (2015--2016)} \\
	\multicolumn{1}{l}{$N=45985$} & \multicolumn{4}{c}{$K$}\\
	\cline{2-5}
	\multicolumn{1}{l}{ } & 0.04N & 0.025N & $K^* = 920$ & 0.02N   \\\hline
	alpha          &  2.259 & 2.898 & 3.516 & 3.526  \\
	SD             &  .053 & .085 & .116 & .116  \\
	\hline
	\multicolumn{5}{c}{Women (2015--2016)} \\
	\multicolumn{1}{l}{$N=49921$} & \multicolumn{4}{c}{$K$}\\
	\cline{2-5}
	\multicolumn{1}{l}{ } & 0.04N & 0.025N & $K^* = 999$ & 0.02N   \\\hline
	alpha          &   2.567 & 3.272 & 3.688 & 3.702  \\
	SD             &  .057 & .093 &.117 & .117  \\
	\hline\hline
\end{tabular}
	\caption{Estimates of the Pareto exponent with their standard errors for the unconditional distribution of earnings changes for men and women in 2007-2008 and 2015-2016.}
	\label{tab:unconditional_estimates}
\end{table}

Furthermore, we provide more information about the upper tail of the unconditional distribution of earnings changes in our data.  
We report the Pareto plots in Figure \ref{fig:pareto_plot_K30} to illustrate the goodness of the Pareto fit to the data at the tail. 
Specifically, we plot the ratio $(Y_{(1)}/Y_{(K)}, ...,Y_{(K-1)}/Y_{(K)}$, with $K = 0.3N$ for the empirical quantiles (solid blue line) against the Pareto fitted ones (dashed black line), obtained as $t^{-1/{\hat{\alpha}}}$ for $t \in [0,1]$. 
Here $Y_{(i)}$ denotes the $i$-th largest observations among all $Y_i$'s and $\hat{\alpha}$ again denotes the Hill estimator using these $Y_i$'s. 
Figure \ref{fig:pareto_plot_K30} shows the Pareto tail fit for the sample of men (left panel) and for two other restricted samples of men: job stayers as defined in Section \ref{sec:robustness_stayers} and those with non-missing observations from $t-3$ to $t+1$ men, with $t$ being 2007.

This evidence is in line with the log-log density of one-year and five-year earnings changes provided in the supplemental material of \citet{bell2021time}. Specifically, they estimate the slope for the left and right tails respectively for the region ($-$4,$-$1) and (1,4), which can be interpreted as the estimates of the Pareto exponent. 
%

\begin{figure}
	\centering
	\includegraphics[width=0.3\textwidth]{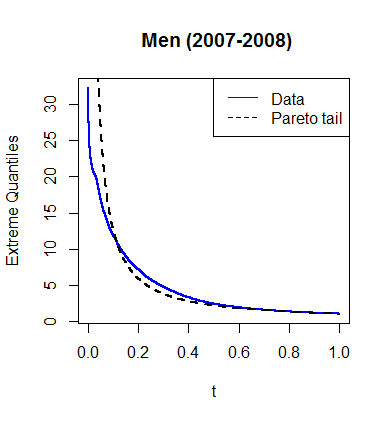}
		\includegraphics[width=0.3\textwidth]{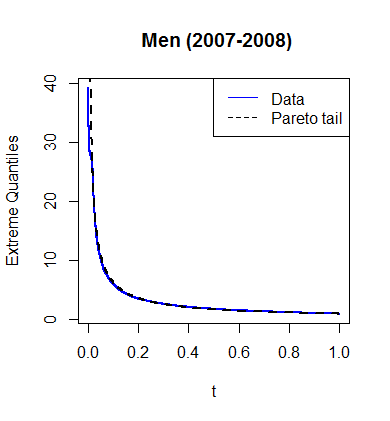}
		\includegraphics[width=0.3\textwidth]{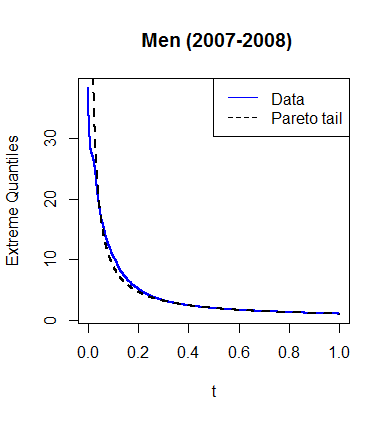}
	\caption{Pareto plots for the unconditional distribution of earnings changes for the benchmark sample (first panel), for Job-Stayers (second panel) and for those with non-missing observations over the period 2004--2008 (third panel), men, for 2007--2008, when using the largest $K=0.3N$ tail observations.}
	\label{fig:pareto_plot_K30}
\end{figure}

\subsection{Discussion}\label{sec:discussion}

Our findings confirm that moment-based measures, such as variance, skewness, or kurtosis, may not be well-defined for some sub-populations. This result points to the importance of using alternatives to the moment-based measures of earnings risk typically used in the literature. We propose a measure of conditional tail earnings risk that overcomes this limitation and offers new insights into the amount of tail risk. Our measure quantifies the amount of risk related to, possibly extreme, shocks individuals experience, given their lagged income and age profile. 

Our first finding is that, compared to younger individuals, older individuals are more likely to be hit by potentially large shocks, e.g., health shocks or occupation changes. For the US, \citet{guvenen2021data} document that earnings growth of males aged 45-55 and earning \$100,000 has a kurtosis of 18, compared to 5 for younger workers earning \$10,000 (and 3 for a Gaussian distribution). These findings highlight the importance of life cycle effects in tail earnings risk, but these estimates are unreliable since kurtosis might not exist in the first place. Our methodology allows measuring this conditional tail risk and indeed reveals that older workers are much more vulnerable to extreme shocks (estimates of our measure are close to 1 along a large part of the distribution of lagged earnings) compared to younger individuals (for which our estimates are close to 2 or larger). Similarly, in our second finding, we show that job stayers face earnings shocks that are more leptokurtic, especially at the bottom of the earnings distribution (at the bottom of the earnings distribution, estimates of our measure go from 5 for the whole sample to around 1 for job stayers). In addition, the frequency and occurrence of leptokurtic shocks do not seem to differ in recessions. A further investigation of the nature of these shocks might shed more light in this direction.
To sum up, our results for the whole sample demonstrate that tail earnings risk increases over the life cycle and is higher for job stayers than for movers. Moreover, at least for the male subsample, there are no relevant differences between recession and expansion periods in our proposed measure of tail earnings risk. 

It is worth discussing how our methodology and results differ from those of other papers in the literature on earnings risks. 
\citet{guvenen2021data}, among others, use moment-based measures, which we have shown to be not well-defined, and complement the analysis with quantile-based measures. Although well-defined when moments may fail to exist, quantile-based measures do not capture extreme events since they do not include observations larger or smaller than some given percentiles. Given our interest in conditional tail earnings risk, the excluded tail observations contain valuable information for the analysis. Among other things, information on tail events is central to quantifying earnings risk under heavy-tailed distributions. Thus, the quantile-based approach may not be informative about extreme income changes and tail events.

Another measure proposed in the literature is the coefficient of variation (CV) of \citet{arellano2021}, computed as the ratio between the mean absolute deviation of income divided by the mean income, conditional on a set of variables. Estimating the numerator and denominator of the CV are two prediction tasks performed with Poisson regressions with both micro and macro predictors. The main idea is to capture the agent's prediction problem by focusing on features of the predictive distribution of income.\footnote{We refer to \citet{arellano2021} for a detailed description of their methodology and the set of predictors used in the analysis. Note that the CV remains well-defined with zero income.} One of their main findings is that younger individuals face higher levels of income risk, and the dispersion of the CV markedly increases in recessions. These results contrast what we obtain with our measure of conditional tail risk. In order to understand what drives the differences between our results and their findings, we build the CV measure of earnings risk with NESPD for the UK to separate the country and institutional effect from the differences due to the different methodology and object of interest. Indeed, the divergences could reflect differences in terms of institutional context, sample selection, and the types of measures used to quantify income risk. There are several constraints in replicating their approach due to some data limitations. For instance, in our case, earnings are the only source of income as we do not observe unemployment benefits. We thus capture a less broad measure of risk compared to theirs.\footnote{When running the regression with both micro and macro variables, we cannot include the highest level of education achieved, unemployment benefits, or the number of days worked in previous years since these variables are not available in our data set.}  
Acknowledging these data limitations, we computed the CV for UK workers.\footnote{Results for the replication of \citet*{arellano2021} for the UK are available upon request.} We get qualitatively similar findings to those in \citet{arellano2021}, although the magnitude of the coefficient of variation is different. In particular, we get a smaller dispersion of the CV compared to what they report for Spain. As in their case, however, we obtain that younger individuals face higher risk: for this subgroup, the coefficient of variation is higher, with the 90th percentile of their CV being much larger than the corresponding CV percentile for other age groups. Thus, institutional settings, different sample sizes, and data sets matter for the magnitude of the results, not for the qualitative differences. We can conclude that the main differences between our findings and theirs come from the different concepts of risk we are measuring rather than differences in context (Spain vs. the UK in this particular case). As with quantile-based measures in \citet{guvenen2021data}, also with the CV measure proposed by \citet{arellano2021}, one might fail to capture extreme income changes and tail events. Our measure captures a different aspect of earnings risk: it quantifies tail risk, and thus, it should be seen as complementary to their proposal. 
 
We further conduct the following robustness check, focusing on earnings changes with the benchmark sample of men in 2007--2008. Following the approach used in the GRID project, we condition on the average of the three previous earnings, going from $t-1$ backward, rather than on base year earnings, i.e., earnings at time $t$. Figure \ref{fig:nespd_2007_avg3}, in Appendix \ref{app:avg}, plots the estimates, along with the one-sided 95\% confidence intervals, of the Pareto exponents $\alpha(x_0)$ of the conditional tail risk for men (left) and women (right) based on the NESPD in the period 2007--2008. This figure illustrates the robustness of our main findings: the increase of tail earning risk over the life cycle.

Finally, in our baseline analysis, we define $Y$ as the absolute value of log earnings changes. We repeat the analysis separately for the left and the right tail of the distribution of earnings changes.\footnote{Results for the right and left tails are available upon request.} For the left and right tails separately considered, the estimates of the Pareto exponent are always not significantly smaller than 4, and point estimates can get much higher values (up to 50 for the right tail, for some subpopulations). 
Overall, the patterns of the left and right tails resemble those of the absolute changes. However, despite these similarities, there are some remarkable differences, especially at the bottom and top of the earnings distribution. More specifically, for the left tail, estimates tend to be lower (taking a maximum value of around 11) and relatively constant or decreasing over the earnings quantiles. Instead, for the right tail, estimates are much more variable and lower at the bottom of the earnings distribution. This finding is consistent with the patterns of asymmetry of conditional Kelly's skewness documented in Figure \ref{fig:nespd_moments}: higher earners are more likely to face large negative earnings shocks, because of limited opportunities for large gains and higher risk of significant declines, while the vice versa holds for workers at the bottom of the earnings distribution.

	{\bf Implications:}
	We close this section with a discussion of some alternative methods that are robust to the non-existence of moments. 
First, if our risk measure implies that the $p$-th moment may not exist, then researchers should select only those calibration/estimation methods that exploit only up to the $(p-1)$-th moments.
For instance, many estimation/calibration methods proposed in the literature, including \citet*{daly2022improving} as well as others like \citet*{meghir2004income}, require only up to second moments of earnings/income changes to exist. 

Second, there are other estimation strategies that are not based on the method of moments. 
For instance, the estimation strategy proposed by \citet*{arellano2017earnings} is theoretically robust to the non-existence of moments. 
Although in their current implementation, they implicitly impose that moments exist, this is not required and, thus, a careful choice of some elements of their implementation, e.g., changing the Laplace tail assumption of the transitory shock, would suffice to make their proposal robust to the existence of moments.   
\citet*{de2020nonlinear} require that the conditional probability distributions over the bins they define exist, and \citet*{janssens2022finite} only impose some assumptions of continuity and the use of a sufficient number of grid points. 
Improvements in the discretization procedure might be obtained by careful tailoring of the approximation strategy to the tails (for instance, adapting \citet*{tauchen1986finite} to non-normal errors for the tail behavior) or by applying methods that rely on moments matching (as in \citet*{farmer2017discretizing} and \citet*{lkhagvasuren2023finite}) but restricting them to the existing moments.\footnote{The method proposed by Rouwenhorst (1995) relies on matching the first two moments only and is applicable to Gaussian AR(1) processes. Given the heavy tail feature, it is unappealing for earnings processes.}

\section{Concluding Remarks\label{sec:conclusion}}

The literature often relies on moment-based measures of earnings risk, such as the variance, skewness, and kurtosis. However, such moments may not exist in the population under heavy-tailed distributions.
In this paper, we show that the population kurtosis, skewness, and even variance indeed often fail to exist for the conditional distribution of earnings growths given age, gender, and past earnings.
Hence, moment-based analyses in the literature may not make sense under heavy-tailed distributions.

Heavy tails of income and earnings risk distributions are costly in economies with risk-averse agents.
Despite this common knowledge, the tail heaviness has been arguably less investigated in the literature on earnings and income dynamics. 
In light of the limited capacity of moment-based measures in quantifying the tail-heaviness, we consider the conditional Pareto exponent as a novel robust measure of conditional earnings risk given observed attributes of individuals and propose a method of estimation and inference about this measure. 

Applying the proposed method to the UK NESPD and, in the Appendix, to the US PSID, we obtain the following findings. 
First, as emphasized above, the population kurtosis, skewness, and even standard deviation might be infinite for the conditional distribution of income growth given certain attributes, such as age, gender, and lagged income, and hence, their sample counterparts might be less informative. 
Second, 40- and 50-year-old workers have overall higher earnings risk than 30-year-old workers, thus earnings risk increases over the life cycle.
Third, conditional earnings risk is higher for job stayers, in particular, at the bottom of the earnings distribution. Fourth, these patterns appear both in the periods of great recession and positive growth, while there are differences as well, especially for women.

To the best of our knowledge, this is the first work to shed light on tail features of heavy-tailed distributions of earnings and income growth with a formal econometric method.
While we focused on two specific data sets for empirical analysis, we hope that our proposed method will spur further empirical research with other data sets and for the study of other economies.

\newpage
\appendix
\section*{Appendix}

Appendix \ref{sec:additional_details} provides additional details about the econometric method introduced in Section \ref{sec:econometric_method}. 
Appendix \ref{sec:additional_results} provides additional empirical results with the UK NESPD.  
Appendix \ref{sec:psid} provides the results with the US PSID.

\section{Additional Details about Section \ref{sec:econometric_method}}\label{sec:additional_details}

Appendix \ref{sec:primitive_conditions} presents a formal statement of the asymptotic normality results, \eqref{eq:asymptotic1} and \eqref{eq:asymptotic2}.
Appendix \ref{sec:simulation} presents simulation studies. 
Appendix \ref{sec:choice} suggests a choice of $I$, $J$, and $K$ and a finite sample adjustment in practice. 
Appendix \ref{sec:adjustments} provides a refinement of our methods with a fixed-$K$ asymptotic analysis.

\subsection{Econometric Theory}

\label{sec:primitive_conditions}

In this appendix section, we provide a formal statement of the asymptotic normality results, \eqref{eq:asymptotic1} and \eqref{eq:asymptotic2}.
We consider the following conditions to establish these results.

\begin{condition}
\label{cond:1}

\begin{enumerate}

\item $(Y_{ij},X_{ij}^{\intercal })^{\intercal}$ is i.i.d. across $i$ and $j$.
In addition, $f_{X}\left( x\right) $ is uniformly
continuously differentiable and bounded away from zero in an open ball
centered at $x_{0}$.

\item $1-F_{Y|X=x}(y)=c(x)y^{-\alpha (x)}\left( 1+d(x)y^{-\gamma
(x)}+r(x,y)\right) $ uniformly as $y\rightarrow \infty $ where $c(\cdot )>0$
and $d(\cdot )$ are continuously differentiable functions and uniformly
bounded between $0$ and $\infty $, $\alpha (\cdot )>0$ and $\gamma \left(
\cdot \right) >0$ are continuously differentiable functions and $r(x,y)$ is
continuously differentiable with bounded derivatives w.r.t. both $x$ and $y$
and satisfies $\lim \sup_{y\rightarrow \infty }\sup_{x\in B_{\delta
}(x_{0})}\left\vert r(x,y)/y^{-\gamma (x)}\right\vert =0$ for some open ball 
$B_{\delta }(x_{0})$ centered at $x_{0}$ with radius $\delta >0$.

\item $K\rightarrow\infty$ and $K=o\left( I^{2\gamma (x_{0})/\left( \alpha (x_{0})+2\gamma
(x_{0})\right) }\right) \rightarrow 0$ as $n\rightarrow \infty $.

\item $I\rightarrow \infty $ and $\lim_{I\rightarrow \infty }J/I^{\lambda
}=C\in \left( 0,\infty \right) $ for some $C$ and $\lambda \in \left(
0,\infty \right) $. In addition, $\lambda /2\geq \gamma (x_{0})/(\alpha
(x_{0})+2\gamma (x_{0}))$.
\end{enumerate}
\end{condition}

Condition 1.1 requires the data to be independent across $i$ and $j$. 
Although we focus on the cross-sectional data set, estimation of tail index with time series data set has also been studied in the literature. 
See, for example, \citet{Hill2015} and reference therein. 
The asymptotic variance in general depends on the time series dependence, and one may apply some long-run variance estimator \citep[e.g.,][Theorem 3]{Hill2010}.
Regarding testing for infinite moments, the tests proposed by \citet{trapani2016testing} and \citet{degiannakis2022} also allow for time series dependence.

Condition 1.2 assumes that the target conditional distribution satisfies a second-order Pareto tail approximation, which is sufficient for Assumption 2. 
The parameter $\gamma (x)$ governs the preciseness of the Pareto tail approximation, and the approximation error $r(x,y)$ is asymptotically negligible. 
The unconditional version (i.e., without $x$) of this condition has been commonly imposed in the statistics literature. 
Essentially, this Pareto-type tail condition is satisfied by all distributions whose densities are smooth and monotonically decay to zero, as referred to the \textit{von Mises'} condition. 
See \citet{deHaan2007book} for a comprehensive review.

Condition 1.3 specifies the choice of $K$, a tuning parameter that characterizes the number of larger order statistics to be considered as stemming from the tail. 
Such a choice is to eliminate the asymptotic bias. 
If we choose $K$ such that $K\times I^{-2\gamma (x_{0})/\left( \alpha(x_{0})+2\gamma (x_{0})\right) }\rightarrow \mu (x_{0})$ for some constant $\mu (x_{0})\in \mathbb{R} $, then the asymptotic distribution in Proposition \ref{prop:main} involves one additional item $-\mu (x_{0})\xi (x_{0})$. 
To avoid such bias, we can select a smaller order $K$ as in Condition 1.3 such that $\mu (x_{0})$ becomes zero asymptotically. 
This is close in spirit to choosing the undersmoothing bandwidth in the standard kernel estimation.
Condition 1.4 requires a large $I$ and a large $J$. Note that $J$ can be substantially smaller than $I$ when $\lambda $ is less than one.

The following proposition characterizes the asymptotic behavior of $\hat{\alpha}\left( x_{0}\right) $ given in \eqref{eq:hill} under Condition \ref{cond:1}. 
\begin{proposition}
\label{prop:main}Suppose that Condition \ref{cond:1} holds. Then 
\begin{equation*}
\sqrt{K}\left( \hat{\alpha}(x_{0})-\alpha (x_{0})\right) \overset{d}{%
\rightarrow }\mathcal{N}\left( 0,\alpha (x_{0})^{2}\right) .
\end{equation*}%
Moreover, for any $x_{0}\neq x_{1}$, $\hat{\alpha}(x_{0})$ and $\hat{\alpha}%
(x_{1})$ are asymptotically independent, or equivalently,%
\begin{equation*}
\sqrt{K}\binom{\hat{\alpha}(x_{0})-\alpha (x_{0})}{\hat{\alpha}%
(x_{1})-\alpha (x_{1})}\overset{d}{\rightarrow }\mathcal{N}\left( 0,\left( 
\begin{array}{cc}
\alpha (x_{0})^{2} & 0 \\ 
0 & \alpha (x_{1})^{2}%
\end{array}%
\right) \right) .
\end{equation*}
\end{proposition}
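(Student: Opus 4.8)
The plan is to recognize \eqref{eq:hill} as an ordinary Hill estimator applied to an i.i.d. sample drawn from a distribution that drifts toward the target as the sample size grows. Because $(Y_{ij},X_{ij}^{\intercal})^{\intercal}$ is i.i.d. across both indices (Condition \ref{cond:1}.1), each row is an independent copy of the others, and the nearest-neighbor selection is a fixed measurable function of a single row. Consequently the induced values $\{Y_{i,[x_0]}\}_{i=1}^{I}$ form an i.i.d. sample of size $n=I$ from one marginal law $\tilde F_J$, namely the distribution of $Y_{ij^\ast}$ where $X_{ij^\ast}$ is the nearest neighbor of $x_0$ among $J$ independent draws of $X$. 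As $J\to\infty$ the nearest neighbor collapses onto $x_0$, so $\tilde F_J$ converges to $F_{Y|X=x_0}$, and the whole problem becomes a Hill CLT for a triangular array in which the underlying law is indexed by $n=I$ through $J\asymp I^{\lambda}$.

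First I would derive a uniform second-order tail expansion for $\tilde F_J$. Writing $1-\tilde F_J(y)=\mathbb{E}[\,1-F_{Y|X=X_{ij^\ast}}(y)\,]$ and substituting the uniform representation of Condition \ref{cond:1}.2, I would expand $c(\cdot)$ and $\alpha(\cdot)$ about $x_0$ along the displacement $X_{ij^\ast}-x_0$. Since $f_X$ is bounded away from zero near $x_0$ (Condition \ref{cond:1}.1), the nearest-neighbor distance $\|X_{ij^\ast}-x_0\|$ is $O_p(b_J)$ for some deterministic scale $b_J\to0$ fixed by $J$, which lets me write $1-\tilde F_J(y)=c(x_0)\,y^{-\alpha(x_0)}\bigl(1+d(x_0)y^{-\gamma(x_0)}+\text{loc}_J(y)+\text{remainder}\bigr)$ uniformly over the tail region, where $\text{loc}_J(y)$ is the nearest-neighbor localization term controlled by $b_J$ (its $y$-dependence entering through differentiating $y^{-\alpha(x)}$ in $x$). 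This exhibits $\tilde F_J$ as second-order regularly varying with leading exponent $\alpha(x_0)$, and it isolates the two sources of bias: the intrinsic term of order $y^{-\gamma(x_0)}$ and the localization term governed by $b_J$. I expect this expansion, and in particular making it hold uniformly over the entire tail up to the growing threshold while tracking the localization term, to be the main obstacle; it is precisely where the uniformity built into Condition \ref{cond:1}.2 is indispensable.

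Second, I would invoke the standard asymptotic theory of the Hill estimator under a second-order condition, as in \citet{deHaan2007book}, adapted to the drifting triangular array. Because $\tilde F_J$ is genuinely i.i.d. with tail exponent converging to $\alpha(x_0)$, the tail empirical (quantile) process converges to the same Gaussian limit as in the fixed-law case, so the localization affects only the centering and not the limiting variance; the delta method applied to $\sqrt{K}(1/\hat\alpha(x_0)-1/\alpha(x_0))\overset{d}{\rightarrow}\mathcal{N}(0,\alpha(x_0)^{-2})$ then delivers the variance $\alpha(x_0)^2$. This yields $\sqrt{K}(\hat\alpha(x_0)-\alpha(x_0))\overset{d}{\rightarrow}\mathcal{N}(B,\alpha(x_0)^2)$ with $B$ equal to $\sqrt{K}$ times the deviation-from-Pareto evaluated at the threshold quantile of order $(I/K)^{1/\alpha(x_0)}$. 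I would then show $B=0$: the intrinsic contribution is of order $\sqrt{K}\,(K/I)^{\gamma(x_0)/\alpha(x_0)}$, which vanishes under the undersmoothing Condition \ref{cond:1}.3; the localization contribution is $\sqrt{K}$ times a term governed by $b_J$, which vanishes under Condition \ref{cond:1}.4, since $\lambda/2\geq\gamma(x_0)/(\alpha(x_0)+2\gamma(x_0))$ is calibrated so that $b_J$ shrinks fast enough relative to $\sqrt{K}=o(I^{\gamma(x_0)/(\alpha(x_0)+2\gamma(x_0))})$.

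Finally, for the joint statement I would establish asymptotic independence directly from the construction. For $x_0\neq x_1$ and $J$ large, within each row the nearest neighbor of $x_0$ localizes near $x_0$ and that of $x_1$ near $x_1$, so the two selected indices $j_0^\ast(i)$ and $j_1^\ast(i)$ differ with probability tending to one. On this event, conditioning on the entire $X$-array makes $Y_{i,[x_0]}$ and $Y_{i,[x_1]}$ conditionally independent (they are distinct draws within the row), and rows are independent across $i$. Hence $\hat\alpha(x_0)$ and $\hat\alpha(x_1)$ are, up to an event of vanishing probability, functions of a conditionally independent product structure, which together with the two marginal CLTs gives the block-diagonal covariance asserted in Proposition \ref{prop:main}.
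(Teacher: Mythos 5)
Your proposal is correct and follows essentially the same route as the paper's own proof: the paper likewise treats the nearest-neighbor induced sample as an (approximately) i.i.d.\ sample from $F_{Y|X=x_{0}}$ and obtains the marginal CLT for $\hat{\xi}(x_{0})=1/\hat{\alpha}(x_{0})$ with the asymptotic bias killed by Condition \ref{cond:1}.3 (delegating this step to the argument of Theorem 2 of \citet{sasaki2022fixed}) followed by the delta method, and it proves asymptotic independence exactly as you do --- row-wise independence, conditional independence of the induced order statistics given the $X$-array, and the event $X_{i,[x_{0}]}\neq X_{i,[x_{1}]}$ holding with probability approaching one. The only difference is packaging: where the paper cites the earlier theorem, you sketch the drifting-law/triangular-array Hill analysis and the bias calibration (intrinsic bias of order $\sqrt{K}(K/I)^{\gamma(x_{0})/\alpha(x_{0})}$ handled by Condition \ref{cond:1}.3, localization bias handled by Condition \ref{cond:1}.4) from scratch, which is precisely the content of the cited argument.
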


\begin{proof}
For convenience of writing, we also introduce the notation $\xi (x_{0})=1/\alpha (x_{0})$,
which is referred to as the tail index. 
We estimate $\hat{\xi}\left(x_{0}\right) $ by the reciprocal of $\hat{\alpha}(x_{0})$, that is, 
\begin{equation*}
\hat{\xi}(x_{0})=\frac{1}{K}\sum_{k=1}^{K}\{\log (Y_{(k),[x_{0}]})-\log\left( Y_{(K+1),[x_{0}]}\right) \}.  \label{eq:xi}
\end{equation*}
Using similar lines of arguments to the proof of Theorem 2 in \citet{sasaki2022fixed}, we have that
\begin{equation}
\sqrt{K}\left( \hat{\xi}(x_{s})-\xi (x_{s})\right) \overset{d}{\rightarrow } \mathcal{N}\left( 0,\xi (x_{s})^{2}\right)
\label{eq:xi}
\end{equation}
for $s=1,0$.
In particular, we use $\alpha (x_{0})$ and $\gamma (x_{0})$ to denote their $\gamma (x_{0})$ and $\tilde{\gamma}\left( x_{0}\right) $, respectively.
Besides, their asymptotic bias term $\mu _{H}$ becomes zero by our Condition 1.3. 
Thus, the asymptotic distribution of $\hat{\alpha}\left( x_{s}\right) $ for $s=1,0$ is derived by the delta method.

Now, it remains to show the asymptotic independence between $\hat{\xi}(x_{1})$ and $\hat{\xi}(x_{0})$ for any $x_{1}\neq x_{0}$. 
Given their joint asymptotic Gaussianity, it suffices to show that 
\begin{equation*}
K\cdot \text{Cov}\left[ \hat{\xi}(x_{1})-\xi (x_{1}),\hat{\xi}(x_{0})-\xi(x_{0})\right] \rightarrow 0,
\end{equation*}
which is proved as follows.

First, given the i.i.d. condition across $i$, we have that
\begin{equation}
\text{Cov}\left[ \log (Y_{i_{1},[x_{0}]}),\log (Y_{i_{2},[x_{1}]})\right] =0\text{ for any } i_{1}\neq i_{2}.  \label{indep across i}
\end{equation}
Moreover, given the i.i.d. condition across $j$, the standard argument in the induced order statistics literature (e.g., Lemma 3.1 in \citet{Bhattachary1984}) yields that, for any $i$, conditional on $\mathbf{X}_{i}\equiv \{X_{i,1},...,X_{i,J}\}$, the induced order statistics $\{Y_{i,(1)},...,Y_{i,(J)}\}$ are independent. 
Denote $X_{i,[x]}$ as the NN of $\{X_{i,j}\}_{j=1}^{J}$ to $x$. 
Therefore as long as $X_{i,[x_{0}]}\neq X_{i,[x_{1}]}$, we have that   
\begin{eqnarray}
&&\text{Cov}\left[ \log (Y_{i,[x_{0}]}),\log (Y_{i,[x_{1}]})\right]   \notag \\
&=&\mathbb{E}\left[ \text{Cov}\left[ \log (Y_{i,[x_{0}]}),\log(Y_{i,[x_{1}]})|X_{i,1},...,X_{i,J}\right] |\mathbf{X}_{i}\right]   \notag \\
&=&\mathbb{E}\left[ \text{Cov}\left[\log(Y_{i,j_{0}}),\log(Y_{i,j_{1}})|X_{i,j_{0}}=X_{i,[x_{0}]},X_{i,j_{1}}=X_{i,[x_{1}]}\right] |\mathbf{X}_{i}\right]   \notag \\
&=&0
\label{indep within i}
\end{eqnarray}
by the law of iterated expectations. 
Since $X_{ij}$ contains at least one continuous component, Lemma 1 in \citet{sasaki2022fixed} implies that $\left\vert \left\vert X_{i,[x_{0}]}-x_{0}\right\vert \right\vert =o_{a.s.}(1)$ and $\left\vert \left\vert X_{i,[x_{1}]}-x_{1}\right\vert \right\vert=o_{a.s.}\left( 1\right) $ as $J\rightarrow \infty $. 
This yields that, for any $i$,  
\begin{equation}
\mathbb{P}\left[ X_{i,[x_{0}]}=X_{i,[x_{1}]}\right] =o(1)\text{ as } J\rightarrow \infty.   \label{tie}
\end{equation}

Now, recall that $\hat{\xi}(x_{1})=K^{-1}\sum_{i=1}^{K}\{\log (Y_{(i),[x_{1}]})-\log \left( Y_{(K+1),[x_{1}]}\right) \}$, which involves $\{Y_{i_{1},[x_{1}]},...,Y_{i_{K+1},[x_{1}]}\}$ for some indices $i_{1},...,i_{K+1}\in \{1,...,I\}$. 
Symmetrically, we have that $\hat{\xi}(x_{0})=K^{-1}\sum_{i=1}^{K}\{\log (Y_{(i),[x_{0}]})-\log \left(Y_{(K+1),[x_{0}]}\right) \}$, which involves $\{Y_{i_{1}^{\prime},[x_{1}]},...,Y_{i_{K+1}^{\prime },[x_{1}]}\}$ for some indices $i_{1}^{\prime },...,i_{K+1}^{\prime }\in \{1,...,I\}$. 
By (\ref{indep across i}), the only potentially non-zero components in $\text{Cov}[ \hat{\xi}(x_{1}),\hat{\xi}\left( x_{0}\right) ] $ will be $\text{Cov}\left[ \log(Y_{i,[x_{1}]}),\log (Y_{i,[x_{0}]})\right] $ for $i\in $ $\{i_{1},...,i_{K+1}\}\cap \{i_{1}^{\prime},...,i_{K+1}^{\prime }\}$. However, these terms are still zero by (\ref{indep within i}) if $X_{i,[x_{0}]}\neq X_{i,[x_{1}]}$, which happens with probability approaching one by (\ref{tie}). 
This completes a proof of the proposition.
\end{proof}

\subsection{Simulation Analysis}\label{sec:simulation}

In this section, we present simulation studies of finite sample performance of the proposed estimation and inference method.
In each iteration, we randomly generate an $I \times J$ array $\{Y_{ij},X_{ij}\}$ according to the following two designs.
Design 1:
$
Y_{ij} \sim \text{Pareto}(\alpha(X_{ij}))
$
where
$
\alpha(x) = 1+10x
$
and
$
X_{ij} \sim \text{Uniform}(0,1).
$
Design 2:
$
Y_{ij} \sim \text{Pareto}(\alpha(X_{ij}))
$
where
$
\alpha(x) = 10*(x^2-x+1)
$
and
$
X_{ij} \sim \text{Uniform}(0,1).
$
We estimate $\alpha(x_0)$ by $\hat\alpha(x_0)$ for each of $x_0 \in \{0.1,0.2,\cdots,0.8,0.9\}$.
Each set of Monte Carlo simulations consists of 1000 iterations of this process. 

Tables \ref{tab:simulation1} and \ref{tab:simulation2} report the results across various values of $I = J \in \{500, 1000\}$ and $K \in \{10,20\}$ under Design 1 and Design 2, respectively.
Displayed statistics are the bias (Bias), standard deviation (SD), root mean square error (RMSE), and 95\% coverage frequency (95\% Cover).
Observe that the RMSE shrinks as $K$ increases.
Furthermore, the 95\% coverage frequencies are close to the nominal probability.
These results demonstrate desired finite sample performance of the proposed method of estimation and inference.

\begin{table}
	\centering
		\begin{tabular}{rccccccccc}
		\hline\hline
		  \multicolumn{1}{l}{$I=J=500$} & \multicolumn{9}{c}{$x_0$}\\
			\cline{2-10}
			\multicolumn{1}{l}{$K = 10$} & 0.100 & 0.200 & 0.300 & 0.400 & 0.500 & 0.600 & 0.700 & 0.800 & 0.900 \\\hline
			Bias      & 0.221 & 0.304 & 0.497 & 0.594 & 0.614 & 0.712 & 0.979 & 0.920 & 1.069\\
			SD        & 0.759 & 1.125 & 1.476 & 2.043 & 2.178 & 2.812 & 3.066 & 3.608 & 4.002\\
			RMSE      & 0.791 & 1.166 & 1.558 & 2.128 & 2.263 & 2.901 & 3.219 & 3.724 & 4.142\\
			95\% Cover& 0.948 & 0.957 & 0.963 & 0.952 & 0.963 & 0.952 & 0.956 & 0.950 & 0.949\\
		\hline
		  \multicolumn{1}{l}{$I=J=500$} & \multicolumn{9}{c}{$x_0$}\\
			\cline{2-10}
			\multicolumn{1}{l}{$K = 20$} & 0.100 & 0.200 & 0.300 & 0.400 & 0.500 & 0.600 & 0.700 & 0.800 & 0.900 \\\hline
			Bias      & 0.121 & 0.158 & 0.207 & 0.273 & 0.262 & 0.370 & 0.519 & 0.451 & 0.519\\
			SD        & 0.512 & 0.722 & 0.951 & 1.287 & 1.448 & 1.773 & 2.126 & 2.141 & 2.568\\
			RMSE      & 0.526 & 0.740 & 0.973 & 1.316 & 1.472 & 1.811 & 2.189 & 2.188 & 2.620\\
			95\% Cover& 0.949 & 0.966 & 0.956 & 0.937 & 0.959 & 0.947 & 0.945 & 0.954 & 0.943\\
		\hline
		  \multicolumn{1}{l}{$I=J=1000$} & \multicolumn{9}{c}{$x_0$}\\
			\cline{2-10}
			\multicolumn{1}{l}{$K = 10$} & 0.100 & 0.200 & 0.300 & 0.400 & 0.500 & 0.600 & 0.700 & 0.800 & 0.900 \\\hline
			Bias      & 0.244 & 0.333 & 0.580 & 0.574 & 0.643 & 0.861 & 0.813 & 0.981 & 1.205\\
			SD        & 0.815 & 1.233 & 1.589 & 1.996 & 2.357 & 2.780 & 2.947 & 3.287 & 3.809\\
			RMSE      & 0.851 & 1.277 & 1.692 & 2.077 & 2.443 & 2.910 & 3.057 & 3.431 & 3.995\\
			95\% Cover& 0.957 & 0.951 & 0.963 & 0.953 & 0.956 & 0.947 & 0.954 & 0.962 & 0.965\\
		\hline
		  \multicolumn{1}{l}{$I=J=1000$} & \multicolumn{9}{c}{$x_0$}\\
			\cline{2-10}
			\multicolumn{1}{l}{$K = 20$} & 0.100 & 0.200 & 0.300 & 0.400 & 0.500 & 0.600 & 0.700 & 0.800 & 0.900 \\\hline
			Bias      & 0.122 & 0.148 & 0.272 & 0.254 & 0.315 & 0.432 & 0.418 & 0.459 & 0.471\\
			SD        & 0.508 & 0.767 & 1.007 & 1.219 & 1.453 & 1.760 & 2.020 & 2.183 & 2.537\\
			RMSE      & 0.523 & 0.781 & 1.043 & 1.245 & 1.487 & 1.812 & 2.063 & 2.231 & 2.580\\
			95\% Cover& 0.944 & 0.950 & 0.957 & 0.947 & 0.958 & 0.946 & 0.953 & 0.965 & 0.947\\
		\hline\hline
		\end{tabular}
	\caption{Simulation results under Design 1.}
	\label{tab:simulation1}
\end{table}

\begin{table}
	\centering
		\begin{tabular}{rccccccccc}
		\hline\hline
		  \multicolumn{1}{l}{$I=J=500$} & \multicolumn{9}{c}{$x_0$}\\
			\cline{2-10}
			\multicolumn{1}{l}{$K = 10$} & 0.100 & 0.200 & 0.300 & 0.400 & 0.500 & 0.600 & 0.700 & 0.800 & 0.900 \\\hline
			Bias      & 1.020 & 0.921 & 0.989 & 0.967 & 0.872 & 0.702 & 0.856 & 0.986 & 1.000\\
			SD        & 3.483 & 3.164 & 3.060 & 2.895 & 2.957 & 2.922 & 3.083 & 3.332 & 3.689\\
			RMSE      & 3.629 & 3.295 & 3.216 & 3.052 & 3.083 & 3.005 & 3.200 & 3.475 & 3.822\\
			95\% Cover& 0.957 & 0.960 & 0.963 & 0.955 & 0.948 & 0.950 & 0.951 & 0.965 & 0.953\\
		\hline
		  \multicolumn{1}{l}{$I=J=500$} & \multicolumn{9}{c}{$x_0$}\\
			\cline{2-10}
			\multicolumn{1}{l}{$K = 20$} & 0.100 & 0.200 & 0.300 & 0.400 & 0.500 & 0.600 & 0.700 & 0.800 & 0.900 \\\hline
			Bias      & 0.523 & 0.408 & 0.448 & 0.514 & 0.375 & 0.335 & 0.397 & 0.576 & 0.527\\
			SD        & 2.266 & 2.026 & 1.979 & 1.881 & 1.822 & 1.784 & 1.887 & 2.105 & 2.252\\
			RMSE      & 2.325 & 2.066 & 2.030 & 1.950 & 1.860 & 1.815 & 1.929 & 2.183 & 2.313\\
			95\% Cover& 0.958 & 0.951 & 0.936 & 0.948 & 0.956 & 0.947 & 0.953 & 0.953 & 0.953\\
		\hline
		  \multicolumn{1}{l}{$I=J=1000$} & \multicolumn{9}{c}{$x_0$}\\
			\cline{2-10}
			\multicolumn{1}{l}{$K = 10$} & 0.100 & 0.200 & 0.300 & 0.400 & 0.500 & 0.600 & 0.700 & 0.800 & 0.900 \\\hline
			Bias      & 0.843 & 0.933 & 0.944 & 0.765 & 0.725 & 0.824 & 0.790 & 0.750 & 1.131\\
			SD        & 3.409 & 3.349 & 3.015 & 2.795 & 2.890 & 3.063 & 2.891 & 3.344 & 3.586\\
			RMSE      & 3.512 & 3.476 & 3.160 & 2.897 & 2.979 & 3.172 & 2.997 & 3.428 & 3.760\\
			95\% Cover& 0.951 & 0.961 & 0.970 & 0.956 & 0.952 & 0.950 & 0.965 & 0.951 & 0.951\\
		\hline
		  \multicolumn{1}{l}{$I=J=1000$} & \multicolumn{9}{c}{$x_0$}\\
			\cline{2-10}
			\multicolumn{1}{l}{$K = 20$} & 0.100 & 0.200 & 0.300 & 0.400 & 0.500 & 0.600 & 0.700 & 0.800 & 0.900 \\\hline
			Bias      & 0.474 & 0.401 & 0.446 & 0.359 & 0.339 & 0.413 & 0.440 & 0.324 & 0.495\\
			SD        & 2.256 & 2.130 & 2.069 & 1.831 & 1.910 & 1.962 & 1.913 & 2.033 & 2.299\\
			RMSE      & 2.306 & 2.168 & 2.116 & 1.866 & 1.940 & 2.005 & 1.963 & 2.059 & 2.352\\
			95\% Cover& 0.953 & 0.948 & 0.944 & 0.947 & 0.947 & 0.948 & 0.960 & 0.947 & 0.941\\
		\hline\hline
		\end{tabular}
	\caption{Simulation results under Design 2.}
	\label{tab:simulation2}
\end{table}

For comparison, we also implement the fully nonparametric method proposed by \citet{Gardes2008}. 
In particular, given a random sample of $(Y_i, X_i)$ of size $N$ and a pre-determined bandwidth $b$, we follow \citet{Gardes2008} to select all the observations that satisfy $|X_i - x_0| \le b$, where $x_0$ again denotes the conditional value. 
Given this selected subsample, we sort the $Y_i$'s and construct the standard Hill estimator based on the largest $K+1$ order statistics as in \eqref{eq:hill}. 
Since there is no theoretical justification of the choice of $b$, we implement various values of $b \in \{0.01, 0.025, 0.05, 0.075, 0.1, 0.125, 0.15, 0.175, 0.2\}$ for sensitivity check. 
We generate data from Design 1 described above and set $x_0 = 0.5$. 
The sample size is $N = I*J \in \{500^2, 1000^2\}$. 
Table \ref{tab:simulation3} presents the results with 1000 iterations. 
We summarize the findings as follows. 
First, the choice of bandwidth has a substantial effect on the performance of this nonparametric estimator. 
The bias is small only when the bandwidth is within a narrow window, which theoretically depends on the unknown higher-order parameters. 
Second, compared with Table \ref{tab:simulation1}, this nonparametric estimator has larger bias and standard deviations than our proposed method when the bandwidth is small enough to guarantee the correct coverage. 

\begin{table}
	\centering
		\begin{tabular}{rccccccccc}
		\hline\hline
		  \multicolumn{1}{l}{$N=500^2$} & \multicolumn{9}{c}{$b$}\\
			\cline{2-10}
			\multicolumn{1}{l}{$K = 10$} & 0.010 & 0.025 & 0.050 & 0.075 & 0.100 & 0.125 & 0.150 & 0.175 & 0.200  \\\hline
			Bias           & 0.627 & 0.562 & 0.479 & 0.428 & 0.296 & -0.064 & -0.360 & -0.549 & -0.994\\
			SD             & 2.349 & 2.238 & 2.213 & 2.186 & 2.153 & 2.068 & 2.058 & 1.952 & 1.849\\
			RMSE         & 2.430 & 2.307 & 2.263 & 2.227 & 2.172 & 2.068 & 2.088 & 2.026 & 2.098\\
			95\% Cover& 0.956 & 0.947 & 0.946 & 0.947 & 0.933 & 0.902 & 0.867 & 0.856 & 0.763\\
		\hline
		  \multicolumn{1}{l}{$N=500^2$} & \multicolumn{9}{c}{$b$}\\
			\cline{2-10}
			\multicolumn{1}{l}{$K = 20$} & 0.010 & 0.025 & 0.050 & 0.075 & 0.100 & 0.125 & 0.150 & 0.175 & 0.200  \\\hline
			Bias      & 0.284 & 0.334 & 0.118 & 0.086 & -0.106 & -0.383 & -0.654 & -0.916 & -1.269 \\
			SD        & 1.482 & 1.549 & 1.453 & 1.501 &	1.444 & 1.359 &	1.196 & 1.184 &	1.086\\
			RMSE      & 1.509 & 1.584 & 1.457 & 1.503 & 1.448 & 1.411 & 1.362 & 1.496 & 1.670\\
			95\% Cover& 0.952 & 0.950 & 0.937 & 0.937 & 0.927 & 0.881 & 0.852 & 0.770 & 0.656\\
		\hline
		  \multicolumn{1}{l}{$N=1000^2$} & \multicolumn{9}{c}{$b$}\\
			\cline{2-10}
			\multicolumn{1}{l}{$K = 10$} & 0.010 & 0.025 & 0.050 & 0.075 & 0.100 & 0.125 & 0.150 & 0.175 & 0.200  \\\hline
			Bias      & 0.842 & 0.501 & 0.457 & 0.294 & 0.088 & -0.238 & -0.565 & -0.804 & -1.024 \\
			SD        & 2.488	& 2.283 & 2.249 & 2.200 & 2.128 & 2.112 & 2.037 & 1.842 & 1.780 \\
                        RMSE   & 2.625 & 2.336 & 2.294 & 2.219 & 2.129 & 2.124 & 2.113 & 2.009 & 2.053 \\
                        95\% Cover & 0.966 & 0.951 & 0.950 & 0.933 & 0.921 & 0.891 & 0.828 & 0.806 & 0.758 \\
		\hline
		  \multicolumn{1}{l}{$N=1000^2$} & \multicolumn{9}{c}{$b$}\\
			\cline{2-10}
			\multicolumn{1}{l}{$K = 20$} & 0.010 & 0.025 & 0.050 & 0.075 & 0.100 & 0.125 & 0.150 & 0.175 & 0.200  \\\hline
			Bias      & 0.360 & 0.350 & 0.143 & 0.022 & -0.219 & -0.449 & -0.713 & -1.015 & -1.360 \\
                        SD        & 1.490 & 1.507 & 1.410 & 1.428 & 1.447 & 1.346 & 1.321 & 1.170 & 1.089 \\
                        RMSE    & 1.532 & 1.546 & 1.416 & 1.428 & 1.463 & 1.418 & 1.501 & 1.548 & 1.742 \\
                        95\% Cover& 0.961 & 0.958 & 0.947 & 0.930 & 0.892 & 0.869 & 0.807 & 0.747 & 0.631 \\

		\hline\hline
		\end{tabular}
	\caption{Simulation results of the nonparametric estimator under Design 1.}
	\label{tab:simulation3}
\end{table}

\subsection{Choice of $I$, $J$, and $K$}\label{sec:choice}

In this appendix section, we present a rule of choosing $I$, $J$, and $K$ in practice when a random sample of size $N$ is available.
For simplicity, we consider the Pareto distribution family as a reference.
In this case, we have $\gamma(x_{0}) = \infty$ in Condition \ref{cond:1}, and hence $\lambda \ge 1$.
Since $K$ is the effective sample size in the asymptotic normality result and $K$ in turn increasing in $I$, the goal is to choose $\lambda \ge 1$ such that $I$ is large.
By Condition \ref{cond:1}.4, such a choice is $\lambda = 1$.
We therefore suggest to set $I = J = \sqrt{N}$.

Once $I$ has been chosen, then we can choose $K$ by adapting the diagnostic method proposed by \citet{GuillouHall2001}. 
We consider the case of estimating $\alpha (x_{0})$ here. 
Define $Z_{i}=i\log(Y_{(i),[x_{0}]}/Y_{(i+1),[x_{0}]})$ for $i=1,\ldots ,I$. 
Suppose that $Y_{i,[x_{0}]}$ is exactly Pareto distributed with exponent $\alpha(x_{0})=:1/\xi (x_{0})$.
Then, $Z_{i}$ should be i.i.d. with the exponential distribution and satisfies $\mathbb{E}\left[ Z_{i}\right] =\xi (x_{0})$ and $\mathbb{V}\left[ Z_{m+j}\right] =\xi (x_{0})^{2}$. 
For any given $K$ and any antisymmetric weights $\{w_{i}\}_{i=1}^{K}$\ such that $w_{i}=-w_{K-i+1}$\ and $\sum_{i=1}^{K}w_{i}=0$, the weighted average statistic $U_{K}\equiv\sum_{i=1}^{K}w_{i}Z_{i}$ should have zero mean and variance $\sum_{i=1}^{K}w_{i}^{2}\xi (x_{0})^{2}$. 
Therefore, we can construct   
\begin{equation*}
\mathcal{T}_{K}\equiv \left( \sum_{i=1}^{K}w_{i}^{2}\right) ^{-1/2}\hat{\xi}%
(x_{0})^{-1}U_{K},
\end{equation*}%
which has zero mean and unit variance if $Y_{i,[x_{0}]}$ is exactly Pareto and if $\hat{\xi}(x_{0})=\xi (x_{0})$.

Now under Condition 1, $\{Z_{i}\}_{i=1}^{K}$\ should be approximately independent and exponentially distributed. 
Then the above properties of $\mathcal{T}_{K}$ hold asymptotically, following a similar argument to that in \citet{GuillouHall2001}. 
When such an approximation performs well for a certain $K$, we expect the fluctuation of $\mathcal{T}_{K}^{2}$\ to be small.
Accordingly, we define the following criteria based on a moving average of $\mathcal{T}_{K}^{2}$: 
\begin{equation*}
\mathcal{C}_{K}=\left( \left( 2l+1\right) ^{-1}\sum_{j=-l}^{l}\mathcal{T}%
_{K+j}^{2}\right) ^{1/2},
\end{equation*}%
where $l$ equals the integer part of $K/2$. 
When $K$ is too large relative to $I$, the Pareto approximation incurs a larger bias, and hence $\mathcal{C}_{K}$ exceeds one by a larger magnitude. 
To obtain an implementable rule, we follow \citet{GuillouHall2001} to use $w_{i}=\text{sgn}\left( K-2i+1\right)\left\vert k-2i+1\right\vert $ and propose to choose the smallest $K$ that
satisfies $\mathcal{C}_{t}>1$ for all $t\geq K$, that is,%
\begin{equation}
K^{\ast }=\min_{1\leq K\leq I}\{K:\mathcal{C}_{t}>1\text{ for all }t\geq K\}.  
\label{k choice}
\end{equation}

To incorporate the uncertainty induced by the random splitting of data, we follow a procedure suggested in Section 3.4 of \citet{CCDDHNR}.
Suppose that we obtain $S$ estimates $\{\hat\alpha_s(x_0)\}_{s=1}^S$ by \eqref{eq:hill} with $S$ times of random splitting.
For point estimation, we use 
$
\bar\alpha_S(x_0) = \text{median}\{ \hat\alpha_s(x_0) \}_{s=1}^S.
$
For variance estimators, we use
\begin{equation*}
\hat\sigma_S^2(x_0) = \text{median}\left\{ \hat\alpha_s^2(x_0) + (\hat\alpha_s(x_0)-\bar\alpha_S(x_0))^2 \right\}_{s=1}^S
\end{equation*}
by accounting for the variation introduced by random splitting.
We use $S=1000$ in our empirical analyses.


\subsection{Refinement Based on the Fixed-$K$ Asymptotics}\label{sec:adjustments}
We close the section by introducing a refinement of our proposed method when the sample size is only moderate. 
Condition 1.3 requires that $K\rightarrow\infty$ and $K/n\rightarrow 0$, which could imply a delicate balance in choosing $K$. 
Therefore, when the sample size is only moderate, it is well known in the literature that the choice of $K$ could be challenging \citep[e.g.,][]{MuellerWang2017}. 
As an alternative, we may rely on the recently developed fixed-$k$ asymptotic framework, which assumes $K$ remains fixed as $n\rightarrow\infty$. 
Note that this is mainly for completeness since the sample size in our data set is reasonably large. 

Recall that
\begin{equation*}
\mathbf{Y}(x_{0})=\{Y_{(1),[x_{0}]},Y_{(2),[x_{0}]},...,Y_{(K+1),[x_{0}]}\}.
\end{equation*}
denote the largest $K+1$ order statistics among the $I$ nearest neighbors to $x_0$. 
Condition 1.1 imples that there exist sequences of constants $a_{I}$ and $b_{I}$ such that for every $v$, 
\begin{equation*}
\lim_{I \rightarrow \infty}F_{Y|X=x_{0}}(a_{I}v+b_{I})^I = G_{\xi }(v),
\end{equation*}
where 
\begin{equation}
G_{\xi }(v)=\left\{ 
\begin{array}{ll}
\exp (-(1+\xi v)^{-1/\xi })\text{, } & 1+\xi v>0\text{, for }\xi \neq 0 \\ 
\exp (-e^{-v})\text{, } & v\in \mathbb{R} \text{, }\xi =0
\end{array}
\right.  \label{def_G}
\end{equation}
is the so-called generalized extreme value distribution. 
In our heavy tail setup, we have that $\xi=\xi(x_0)=1/\alpha(x_0)>0$.

Given Conditions 1.1 and 1.2, the proof of Theorem 1 in \citet{sasaki2022fixed} shows that there exist sequences of constants $a_I$ and $b_I$ such that for any fixed $K$, 
\begin{equation}
\frac{\mathbf{Y}(x_0)-b_{I}}{a_{I}}\overset{d}{\rightarrow }\mathbf{V}=\left( 
\begin{array}{c}
V_{1} \\ 
\vdots \\ 
V_{K+1}%
\end{array}%
\right) \text{, as }I,J\rightarrow \infty ,  \label{evt_k}
\end{equation}%
where the joint probability density function (PDF)~of $\mathbf{V}$ is given by 
\begin{equation}
f_{\mathbf{V}}(v_{1},\ldots ,v_{K+1};\xi )=G_{\xi }(v_{K+1})\prod_{i=1}^{K+1}g_{\xi }(v_{i})/G_{\xi }(v_{i})  \label{evt_pdf}
\end{equation}%
for $v_{K+1}\leq v_{K}\leq \ldots \leq v_{1}$ with $g_{\xi }(v)=\partial
G_{\xi }(v)/\partial v$, and zero otherwise.

Note that the constants $a_{I}$ and $b_{I}$ depend on $\xi $ (and equivalently $\alpha(x_0)$), and a precise estimation of them is challenging, especially when the sample size is only moderate. 
To sidestep this problem, we can consider the following self-normalized statistics 
\begin{equation*}
\mathbf{Y}^{\ast}(x_0) = \frac{\mathbf{Y}(x_0) - Y_{(K+1),[x_{0}]}}{Y_{(1),[x_{0}]} - Y_{(K+1),[x_{0}]}},
\end{equation*}
which converges in distribution to 
\begin{equation}\label{eq:vstar}
\mathbf{V}^{\ast } = \left( 1,\frac{V_{2}-V_{K+1}}{V_{1}-V_{K+1}},...,0\right) . 
\end{equation}
Note that the first and the last components are respectively one and zero by construction. 
The limiting density $f_{\mathbf{V}^{\ast}}$ can be easily derived from \eqref{evt_pdf} \citep[e.g.,][equation (3.2)]{MaSasakiWang2022}.
Since this limiting density is fully characterized by $\xi=1/\alpha(x_0)$, we can perform the (generalized) likelihood ratio test for the problem \eqref{hypo}. 
In particular, we can test, for $r=2,3,4$,
\begin{equation*}
H_0: \xi = 1/r \text{ against } H_1: \xi > 1/r 
\end{equation*}
using the test statistic similar to \citet[][equation (2.3)]{MaSasakiWang2022}.
Such a fixed-$K$ test controls size asymptotically as $I, J \rightarrow \infty$, which performs well when the sample size is only moderate.


\section{Additional Results with NESPD}\label{sec:additional_results}
In this section, we present supplementary results using the UK NESDP data set to illustrate the robustness of our main findings. 

\subsection{Density Estimates}\label{app: density}
We provide more information for the kernel estimates that we plot in the densities figures, Figures \ref{fig:nespd_density_2007} and \ref{fig:nespd_density_2015}, which are obtained with the Gaussian smoothing kernel. 
Figure \ref{fig:nespd_density_all_weightings_2007} illustrates the kernel density estimates under different weighting functions, for men in the period 2007--2008. We use the following weighting functions: Gaussian (default in the paper), Epanechnikov, triangular, rectangular, and cosine, as well as the histogram of the data. The kernel estimates do not vary significantly under the different weighting functions. 

\begin{figure}
	\centering
	\includegraphics[width=0.3\textwidth]{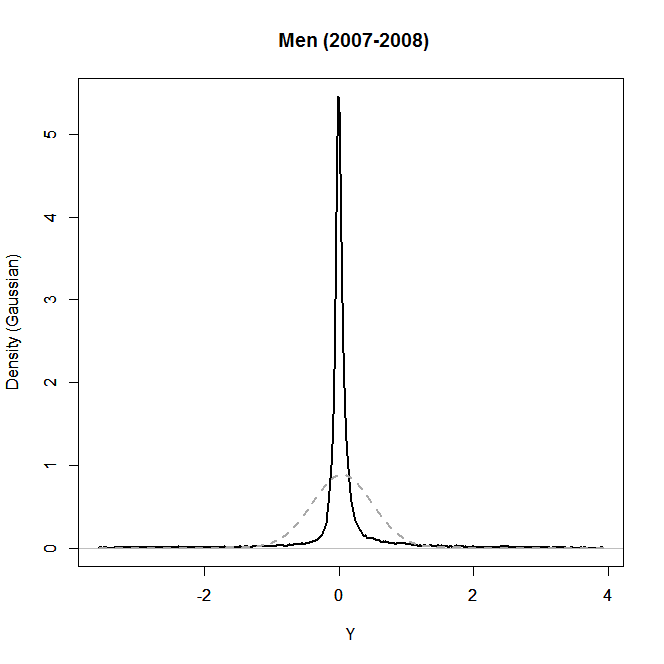}
	\includegraphics[width=0.3\textwidth]{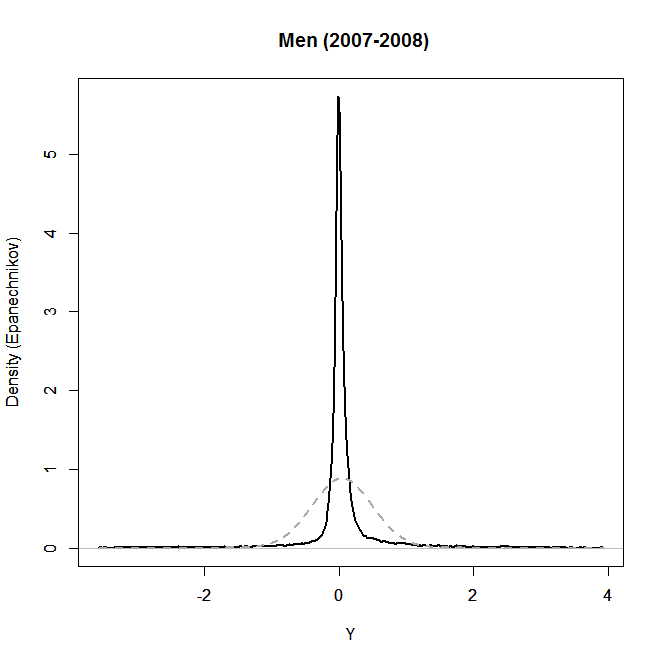}
	\includegraphics[width=0.3\textwidth]{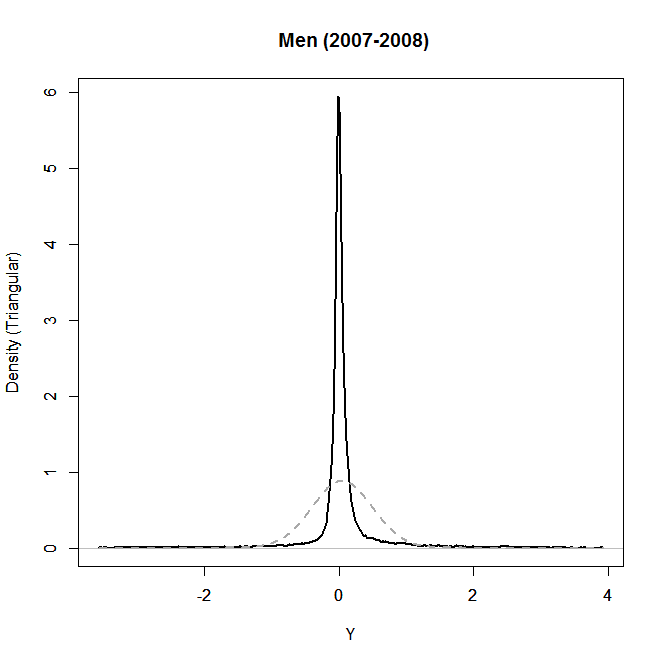}
	\includegraphics[width=0.3\textwidth]{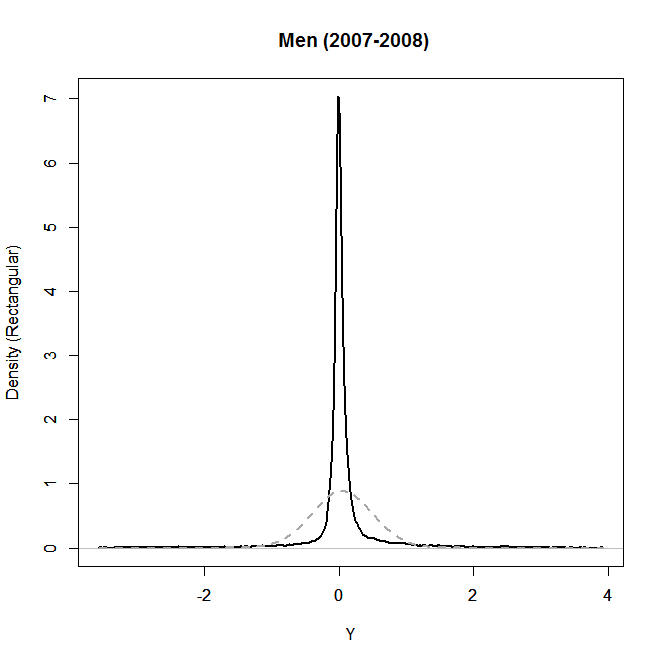}
	\includegraphics[width=0.3\textwidth]{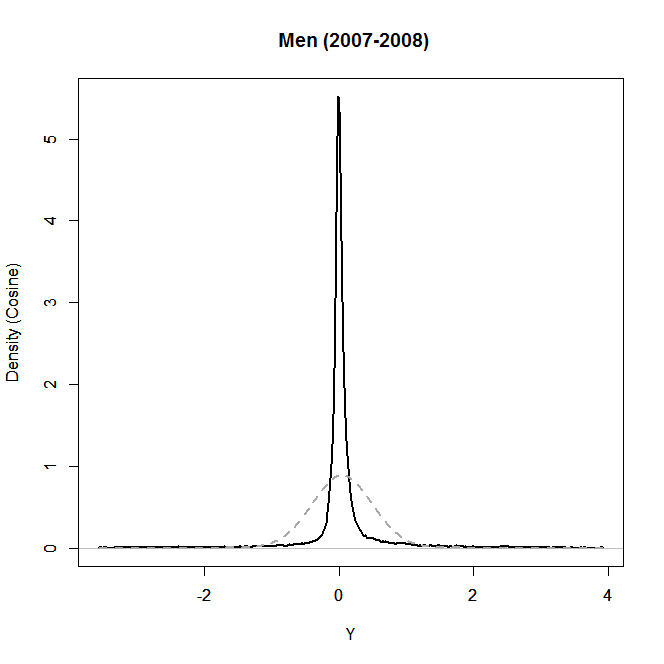}
	\includegraphics[width=0.3\textwidth]{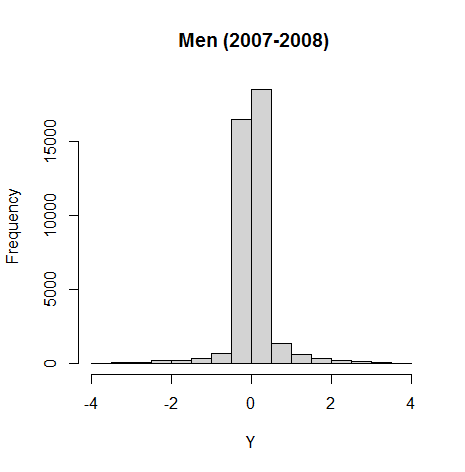}
	\caption{Kernel density estimates (black line) of $Y$, which is defined as a one-year change in log earnings, in 2007 in the NESPD. The panels show the density of men with five different weighting schemes as well as the histogram of the data. Also shown in gray dashed lines are the normal density fit to data. Number of individuals: 38,955 men.}
	\label{fig:nespd_density_all_weightings_2007}
\end{figure}

\subsection{Tail Earnings Risk Over Time}\label{sec:additional_years}
In Section \ref{sec:results_admin} in the main text, we focus on the period 2007--2008 of the great recession and the period 2015--2016 of growth for conservation of space.
In this appendix section, we repeat the econometric analysis for every other period, in addition to the benchmark years: 2005--2006, 2009--2010, 2011--2012, and 2013--2014 to demonstrate the robustness of the observed patterns reported in the main text.\footnote{Due to space constraints, we provide results for alternate periods. The results for other years 
can be obtained upon request from the authors.}
Figures \ref{fig:nespd_2005}, \ref{fig:nespd_2009}, \ref{fig:nespd_2011}, and \ref{fig:nespd_2013} illustrate the estimates of the conditional Pareto exponents $\alpha(x_0)$ (in black lines) along with the upper bounds of their one-sided 95\% confidence intervals (in gray lines). 
In each figure, the left (respectively, right) column shows the results for men (respectively, women). 
The top, middle, and bottom panels show results for 30-, 40- and 50-year-old individuals. 
The findings are similar to those in Section \ref{sec:empirical}. 

\begin{figure}
	\centering
		\includegraphics[width=0.49\textwidth]{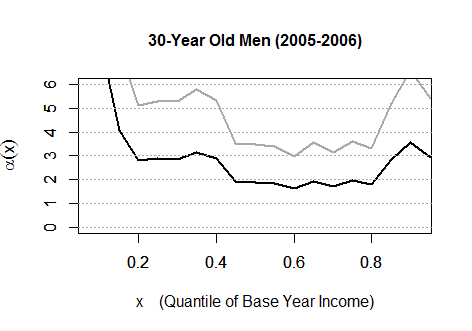}
		\includegraphics[width=0.49\textwidth]{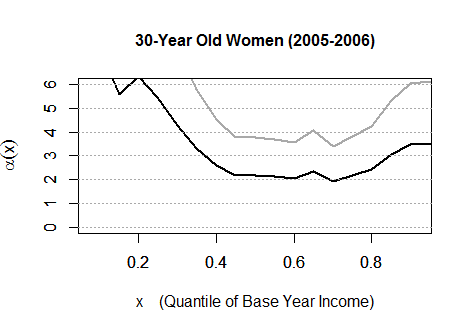}
		\includegraphics[width=0.49\textwidth]{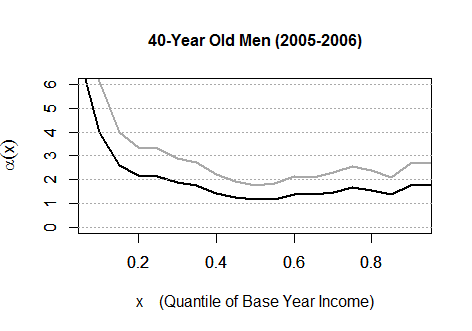}
		\includegraphics[width=0.49\textwidth]{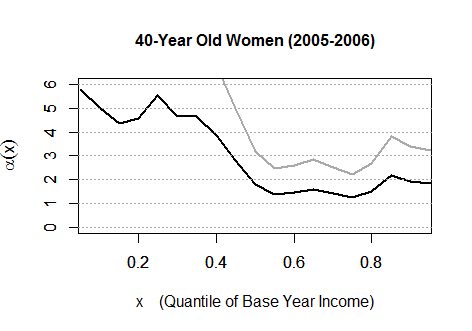}
		\includegraphics[width=0.49\textwidth]{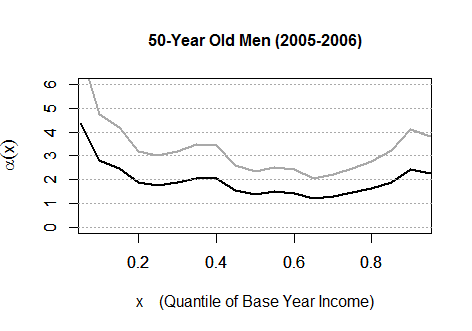}
		\includegraphics[width=0.49\textwidth]{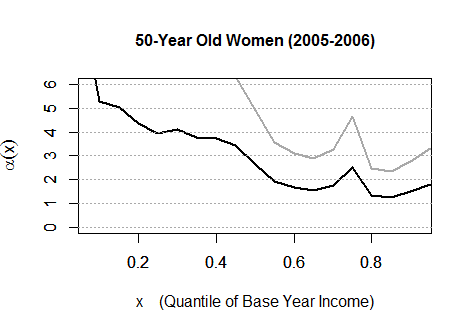}
	\caption{Estimates (black lines) and the one-sided 95\% confidence intervals (gray lines) of the Pareto exponents $\alpha(x_0)$ of the conditional tail risk for men (left) and women (right) based on the NESPD in the period 2005--2006. The left (respectively, right) column shows the results for men (respectively, women). The top, middle, and bottom panels show results for 30-, 40- and 50-year-old individuals. Number of individuals: 48,542 men (47,851 women).}
	\label{fig:nespd_2005}
\end{figure}
\begin{figure}
	\centering
		\includegraphics[width=0.49\textwidth]{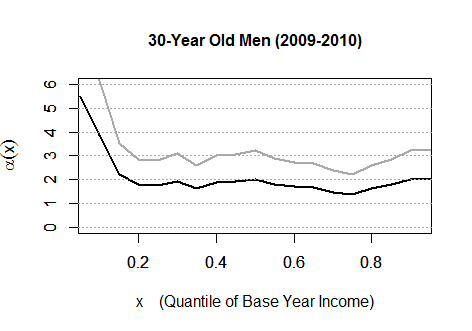}
		\includegraphics[width=0.49\textwidth]{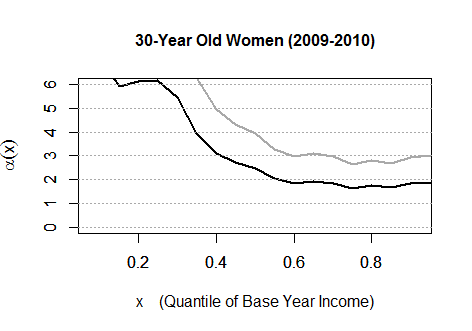}
		\includegraphics[width=0.49\textwidth]{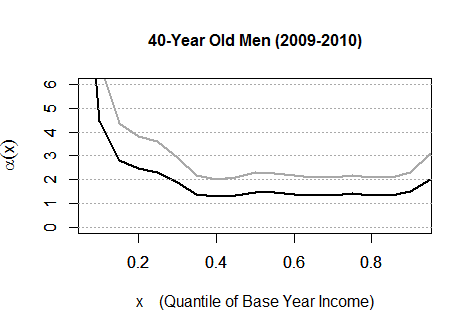}
		\includegraphics[width=0.49\textwidth]{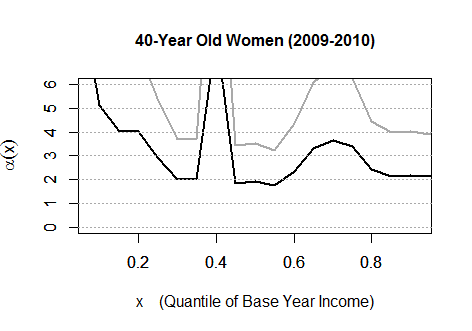}
		\includegraphics[width=0.49\textwidth]{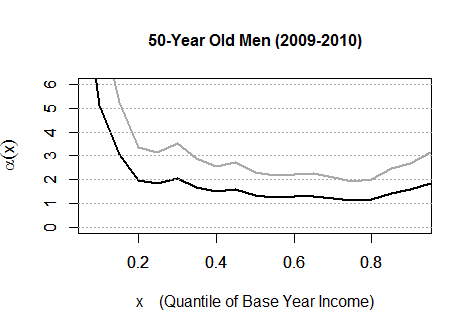}
		\includegraphics[width=0.49\textwidth]{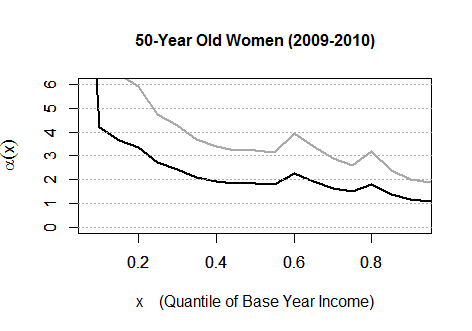}
	\caption{Estimates (black lines) and the one-sided 95\% confidence intervals (gray lines) of the Pareto exponents $\alpha(x_0)$ of the conditional tail risk for men (left) and women (right) based on the NESPD in the period 2009--2010. The left (respectively, right) column shows the results for men (respectively, women). The top, middle, and bottom panels show results for 30-, 40- and 50-year-old individuals. Number of individuals: 48,368 men (50,372 women).}
	\label{fig:nespd_2009}
\end{figure}
%

\begin{figure}
	\centering
		\includegraphics[width=0.49\textwidth]{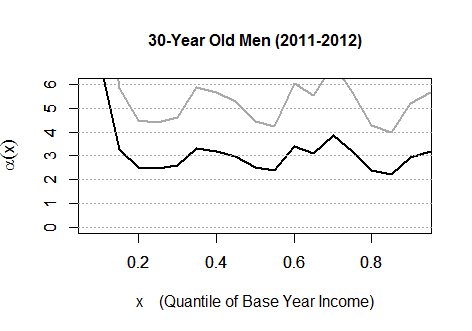}
		\includegraphics[width=0.49\textwidth]{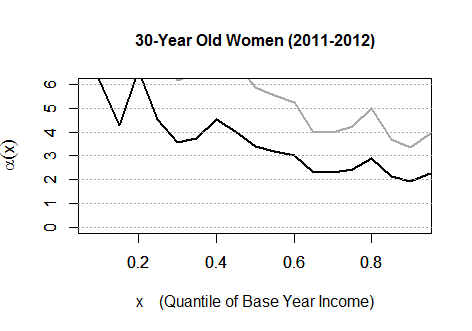}
		\includegraphics[width=0.49\textwidth]{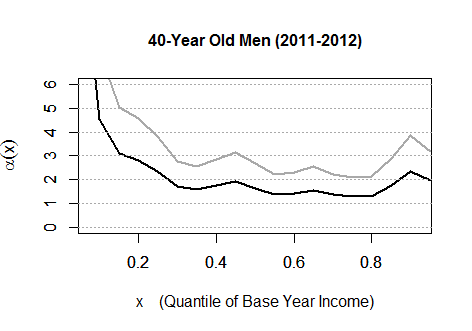}
		\includegraphics[width=0.49\textwidth]{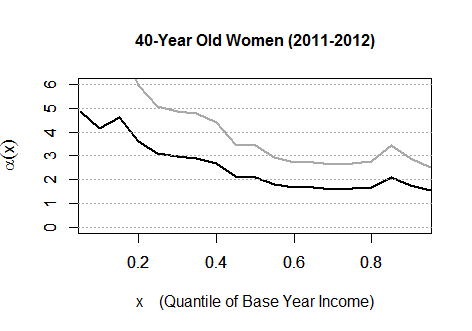}
		\includegraphics[width=0.49\textwidth]{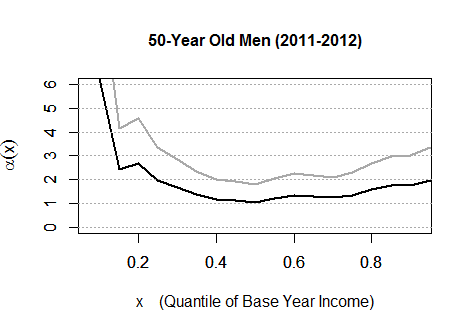}
		\includegraphics[width=0.49\textwidth]{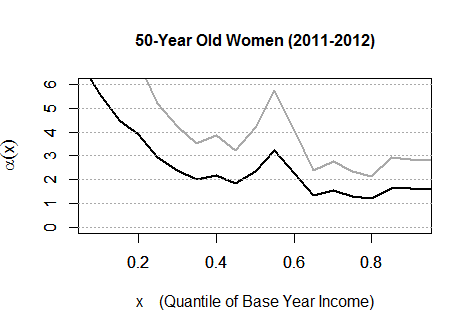}
	\caption{Estimates (black lines) and the one-sided 95\% confidence intervals (gray lines) of the Pareto exponents $\alpha(x_0)$ of the conditional tail risk for men (left) and women (right) based on the NESPD in the period 2011--2012. The left (respectively, right) column shows the results for men (respectively, women). The top, middle, and bottom panels show results for 30-, 40- and 50-year-old individuals. Number of individuals: 49,720 men (52,874 women).}
	\label{fig:nespd_2011}
\end{figure}


\begin{figure}
	\centering
		\includegraphics[width=0.49\textwidth]{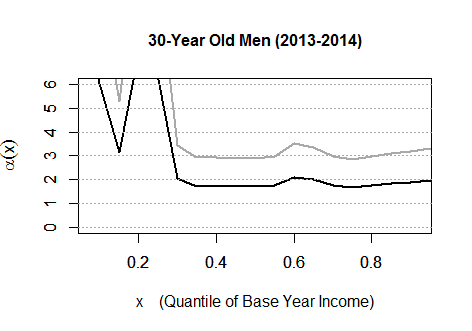}
		\includegraphics[width=0.49\textwidth]{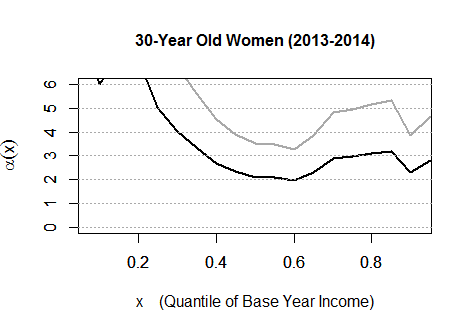}
		\includegraphics[width=0.49\textwidth]{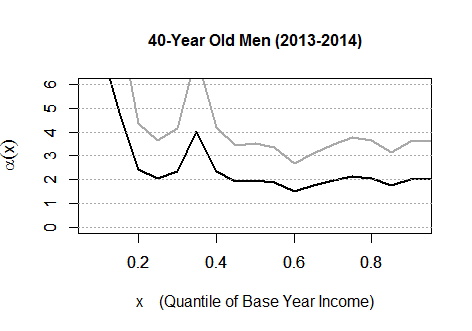}
		\includegraphics[width=0.49\textwidth]{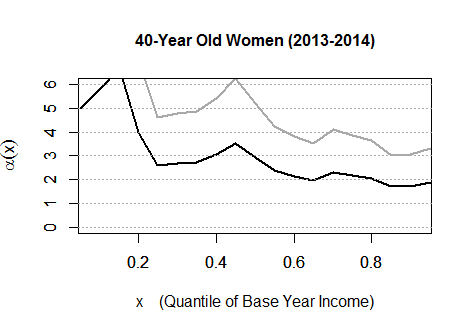}
		\includegraphics[width=0.49\textwidth]{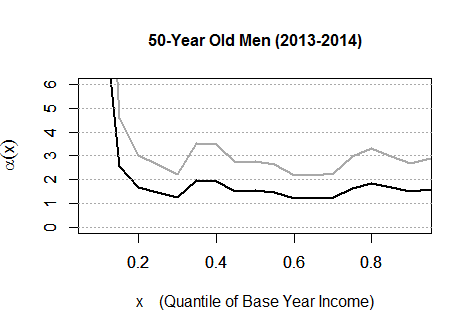}
		\includegraphics[width=0.49\textwidth]{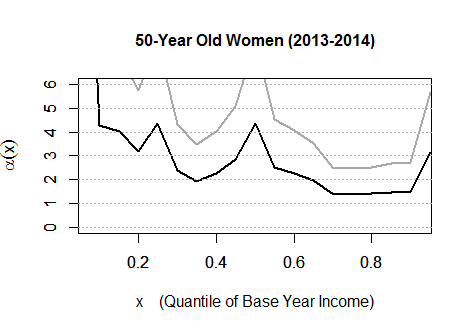}
	\caption{Estimates (black lines) and the one-sided 95\% confidence intervals (gray lines) of the Pareto exponents $\alpha(x_0)$ of the conditional tail risk for men (left) and women (right) based on the NESPD in the period 2013--2014. The left (respectively, right) column shows the results for men (respectively, women). The top, middle, and bottom panels show results for 30-, 40- and 50-year-old individuals. Number of individuals: 49,455 men (52,586 women).}
	\label{fig:nespd_2013}
\end{figure}


\subsection{Conditioning Variable: Average of Lagged Earnings}\label{app:avg}
The main text focuses on single-year earnings as the conditioning variable.
The current appendix section conducts robustness checks with an alternative definition of the conditioning variable.
In Figure \ref{fig:nespd_2007_avg3}, we repeat the analysis using as the conditioning variable the average of the three previous earnings from $t-1$, going backward, instead of base year earnings. Thus, we consider the average of earnings over the period 2004--2006 as the conditioning variable and restrict the sample to be the balanced panel of men with non-missing observations from $t-3$ to $t+1$ with $t=2007$ in this case. Some quantitative differences arise with respect to the distribution of lagged earnings compared to the main results shown in Figures \ref{fig:nespd_2007}. However, note that the sample size almost halves when restricting the panel to be balanced in the five years from 2004 to 2008, which we use to produce Figure \ref{fig:nespd_2007_avg3} (20,682 men and 19,864 women for 2007--2008). The differences with respect to the benchmark sample are, at least partly, due to the reduction in sample size.\footnote{Indeed, we obtain similar findings to those obtained conditioning on the average of the three previous earnings, when repeating the analysis with the restricted sample size of Figure \ref{fig:nespd_2007_avg3}, but keeping the conditioning variable to be lagged earnings as in Figure \ref{fig:nespd_2007}.}
\begin{figure}
	\centering
	\includegraphics[width=0.49\textwidth]{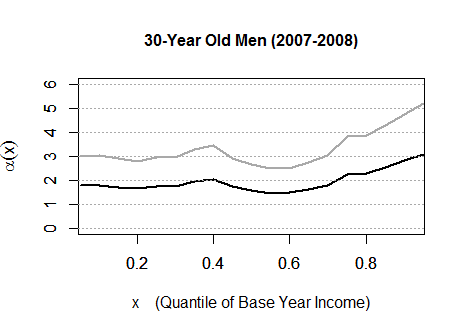}
	\includegraphics[width=0.49\textwidth]{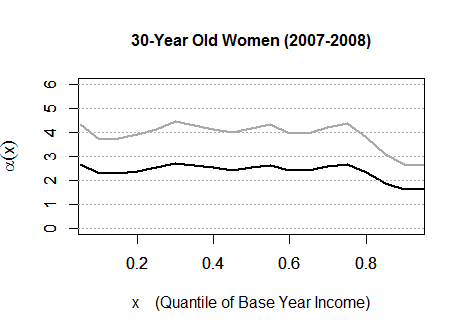}
	\includegraphics[width=0.49\textwidth]{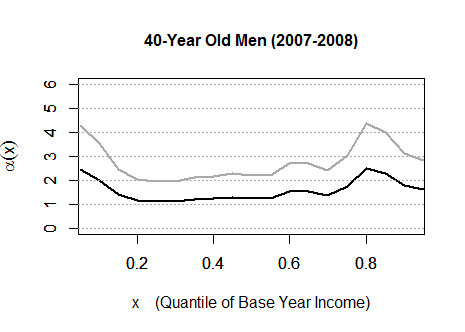}
	\includegraphics[width=0.49\textwidth]{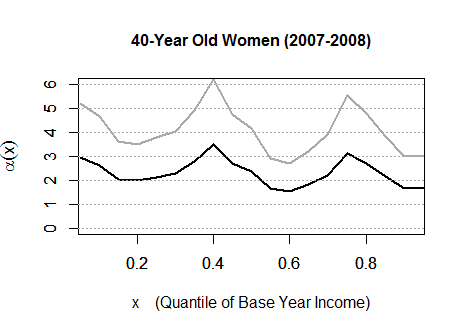}
	\includegraphics[width=0.49\textwidth]{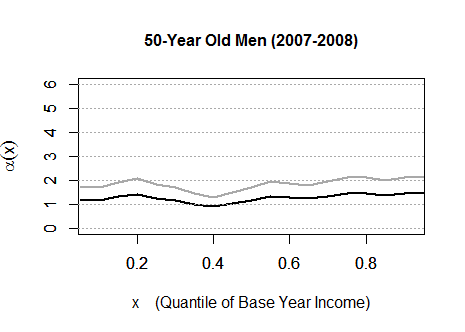}
	\includegraphics[width=0.49\textwidth]{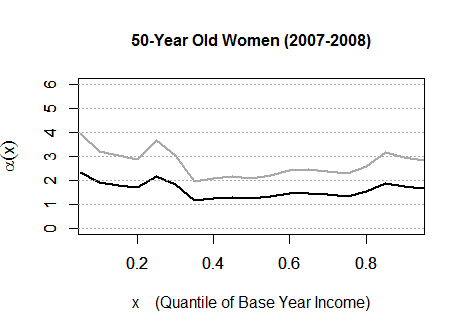}
	\caption{Estimates (black lines) and the one-sided 95\% confidence intervals (gray lines) of the Pareto exponents $\alpha(x_0)$ of the conditional tail risk for men (left) and women (right) based on the NESPD in the period 2007--2008. The left (respectively, right) column shows the results for men (respectively, women). The top, middle, and bottom panels show the results for 30-, 40- and 50-year-old individuals. The conditioning variable is the average of the 3 previous earnings, from $t-1$ going backwards. Number of individuals: 20,682 men and 19,864 women.}
	\label{fig:nespd_2007_avg3}
\end{figure}

\section{Panel Study of Income Dynamics}\label{sec:psid}

\subsection{Data}\label{sec:psid_data}

We also apply our methodology to another data set, the US Panel Study of Income Dynamics (PSID), which is one of the most commonly used data sets for empirical research on income dynamics.
The portion of data for the pair, 2007 and 2009, of years across the period of the great recession is extracted for our use.
We select the subsample of both men and women whose total taxable incomes were recorded and non-zero in both 2007 and 2009 and were aged between 25 and 64 in 2007.
This sample selection leaves 271 individuals.
For the purpose of comparisons, we also use the portion of data for the pair, 2017 and 2019, of the most recent survey years at the time of our writing of this paper.
Note that this period is associated with positive economic growth in the United States.
The sample selection procedure described above leaves 389 individuals for this period.

We define a measure $Y$ as the absolute difference of the log income in 2007 and the log income in 2009 (i.e., two-year income growth rate).
Similarly, we also construct this variable for the period between 2017 and 2019.
Figures \ref{fig:psid_density_2007} and \ref{fig:psid_density_2017} display kernel density estimates of the income measure $Y$ for the period 2007--2009 and the period 2017--2019, respectively.
In each figure, the left and right panels show the densities for men and women, respectively.
Also shown in gray dashed lines are the normal density plots fit to data.
Observe that each kernel density exhibits a large spike in the middle of the distribution sticking upward out of the reference normal density.
Moreover, each kernel density has heavier tails compared to the reference normal density.
These features of the estimated densities evidence that the actual distributions of $Y$ indeed have heavier tails than normal distributions, as documented in the previous literature.
That said, as emphasized in the introductory section, nonparametric density plots cannot informatively demonstrate evidence of heavy tails.

\begin{figure}
	\centering
		\includegraphics[width=0.49\textwidth]{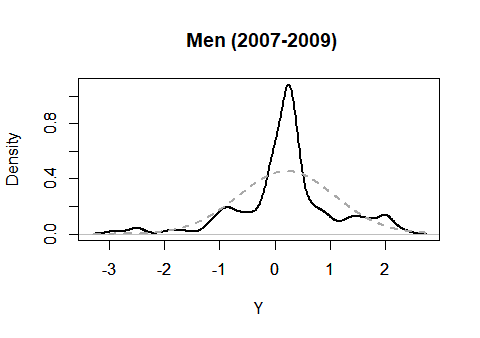}
		\includegraphics[width=0.49\textwidth]{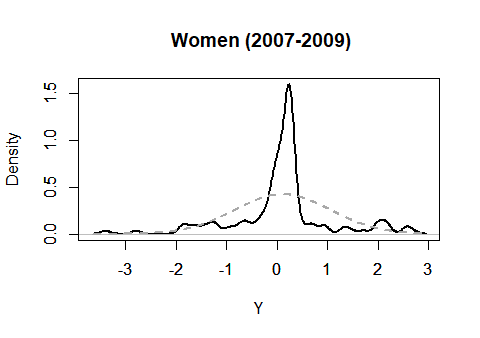}
	\caption{Kernel density estimates (black line) of the risk measure $Y$ in 2007 in the PSID. The right (respectively, left) panel show the density of men (respectively, women). Also shown in gray dashed lines are the normal density fit to data.}
	\label{fig:psid_density_2007}
\end{figure}

\begin{figure}
	\centering
		\includegraphics[width=0.49\textwidth]{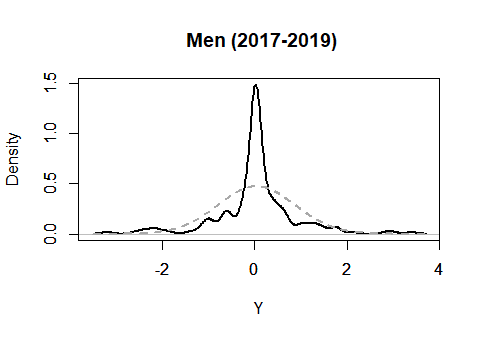}
		\includegraphics[width=0.49\textwidth]{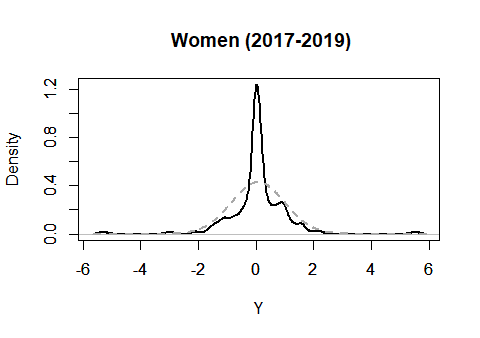}
	\caption{Kernel density estimates (black line) of the risk measure $Y$ in 2017 in the PSID. The right (respectively, left) panel show the density of men (respectively, women). Also shown in gray dashed lines are the normal density fit to data.}
	\label{fig:psid_density_2017}
\end{figure}

We analyze the heaviness of the tails of the conditional distributions of $Y$ given the income level and age in the base year (i.e., the base year is 2007 for the period 2007--2009 and it is 2017 for the period 2017--2019).

\subsection{Results}\label{sec:results_psid}

Applying the method introduced in Section \ref{sec:econometric_method} (and also Appendix \ref{sec:additional_details}) to the US Panel Study of Income Dynamics (PSID) described in Appendix \ref{sec:psid_data}, we analyze the conditional tail risk of income of adult individuals in the United States.
We define $Y$ by the absolute difference of the log income in 2007 and the log income in 2009 for our baseline analysis.
For the conditioning variables $X$, we include the quantile of income level and the age of the individual in the base year (2007), following \citet{guvenen2021data}.
With this setting, we study the conditional Pareto exponent $\alpha(x_0)$ for each point $x_0$ of income levels from $\{0.05,0.10,\cdots,0.90,0.95\}$ (in quantile) and ages from $\{30,40,50\}$ for each of men and women.

Figure \ref{fig:psid_2007} illustrates the estimates of the conditional Pareto exponents $\alpha(x_0)$ (in black lines) along with the upper bounds of their one-sided 95\% confidence intervals (in gray lines) for 30-, 40- and 50-year-old individuals, respectively.
The top (respectively, bottom) panel shows results for men (respectively, women) in each figure.
We observe different patterns of heterogeneous income risk across age, income, and gender groups and describe them in the order of age and gender. 

\begin{figure}
	\centering
		\includegraphics[width=0.49\textwidth]{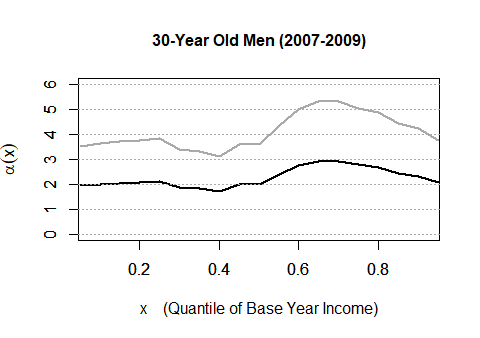}
		\includegraphics[width=0.49\textwidth]{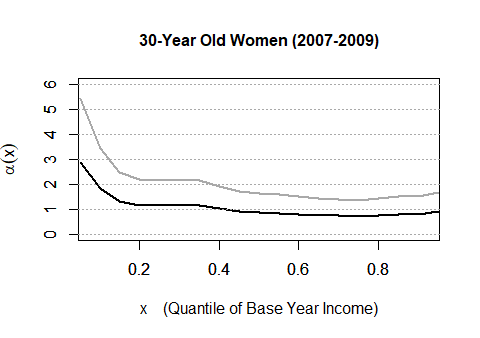}
		\includegraphics[width=0.49\textwidth]{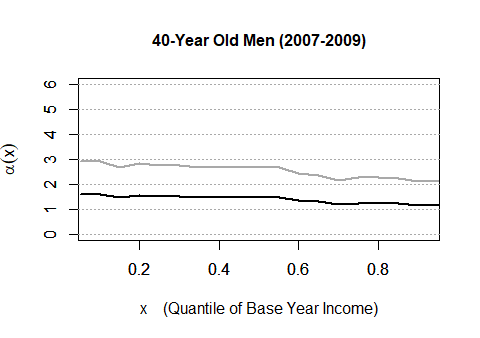}
		\includegraphics[width=0.49\textwidth]{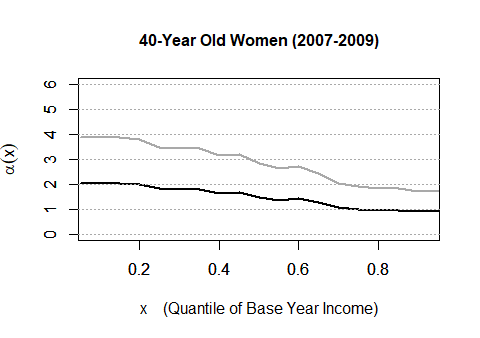}
		\includegraphics[width=0.49\textwidth]{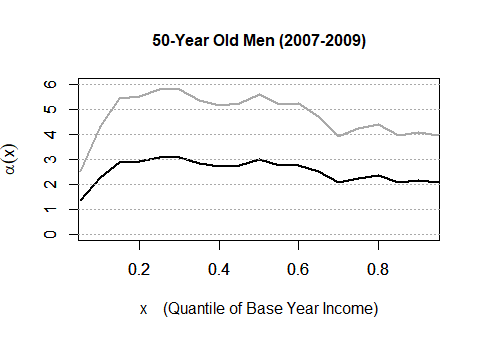}
		\includegraphics[width=0.49\textwidth]{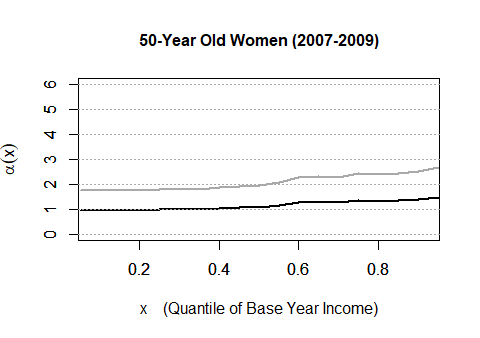}
	\caption{Estimates (black lines) and the one-sided 95\% confidence intervals (gray lines) of the Pareto exponents $\alpha(x_0)$ of the conditional tail risk for men (left) and women (right) based on the PSID in the period 2007--2009. The left (respectively, right) column shows the results for men (respectively, women). The top, middle, and bottom panels show results for 30-, 40- and 50-year-old individuals.}
	\label{fig:psid_2007}
\end{figure}

%
%
%

For 30-year-old men (the top left panel in Figure \ref{fig:psid_2007}), the conditional Pareto exponents (in point estimates) range from 1.7 to 2.9, and the upper bounds of the one-sided 95\% confidence intervals range from 3.1 to 5.3.
Given that income received in 2007 was at or below the median, the conditional Pareto exponent is significantly less than four, implying that the conditional kurtosis of income growth does not exist for these lower-income groups of young men.
The same conclusion also applies to the very top quantiles (top 5 percent).
Overall, the kurtosis barely exists for most of the base-year income levels even if we fail to reject the hypothesis of finite kurtosis.
For 30-year-old women (the top right panel in Figure \ref{fig:psid_2007}), the conditional Pareto exponents (in point estimates) range from 0.7 to 2.9, and the upper bounds of the one-sided 95\% confidence intervals range from 1.4 to 5.4.
For this subpopulation, the conditional Pareto exponent is significantly less than four except for the very bottom quantiles (bottom 5 percent), implying that the conditional kurtosis does not exist. 
Furthermore, given that income received in 2007 was at or above the 15th (respectively, 40th) percentile, the Pareto exponent is significantly less than three (respectively, two), implying that even the conditional skewness (respectively, standard deviation) does not exist.
Comparing the results between men and women at age 30, we observe that women were more vulnerable to income risk than men, except at the very bottom quantiles of the base year income level.

For 40-year-old men (the middle left panel in Figure \ref{fig:psid_2007}), the conditional Pareto exponents (in point estimates) range from 1.2 to 1.6, and the upper bounds of the one-sided 95\% confidence intervals range from 2.1 to 2.9.
Remarkably, the income risk of 40-year-old men is higher than those of 30-year-old men.
For this age group of men, we reject the hypothesis of finite kurtosis at any level of base-year income.
For 40-year-old women (the middle right panel in Figure \ref{fig:psid_2007}), the conditional Pareto exponents (in point estimates) range from 0.9 to 2.0, and the upper bounds of the one-sided 95\% confidence intervals range from 1.7 to 3.9.
For this age group of women, we reject the hypothesis of finite kurtosis at any income level.
Furthermore, given that income received in 2007 was at or above the 50th (respectively, 75th) percentile, the Pareto exponent is significantly less than three (respectively, two), implying that even the conditional skewness (respectively, standard deviation) does not exist.
Comparing the results between men and women at age 30, we observe that men are almost as vulnerable to income risk as women for this middle age group.

For 50-year-old men (the bottom left panel in Figure \ref{fig:psid_2007}), the conditional Pareto exponents (in point estimates) range from 1.3 to 3.1, and the upper bounds of the one-sided 95\% confidence intervals range from 2.5 to 5.8.
For this age group of men, the conditional Pareto exponents are relatively high, and we fail to reject the hypothesis of finite conditional kurtosis except at the very bottom quantiles (bottom 5 percent) of base-year income.
For 50-year-old women (the bottom right panel in Figure \ref{fig:psid_2007}), the conditional Pareto exponents (in point estimates) range from 1.0 to 1.5, and the upper bounds of the one-sided 95\% confidence intervals range from 1.7 to 2.7.
For this age group of women, we reject the hypothesis of finite conditional skewness at any income level.
Furthermore, given that income received in 2007 was at or below the median, the Pareto exponent is significantly less than three, implying that even the conditional skewness does not exist.

\begin{figure}
	\centering
		\includegraphics[width=0.49\textwidth]{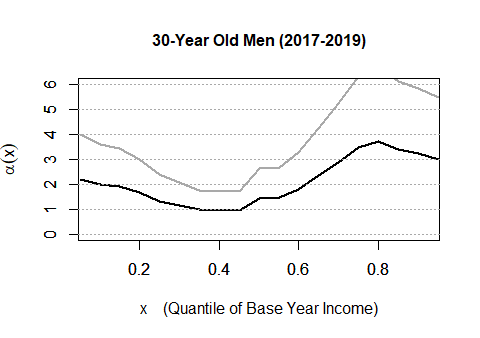}
		\includegraphics[width=0.49\textwidth]{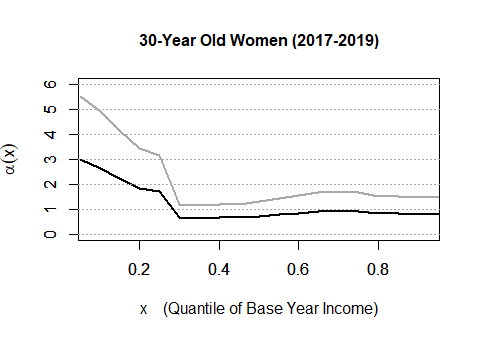}
		\includegraphics[width=0.49\textwidth]{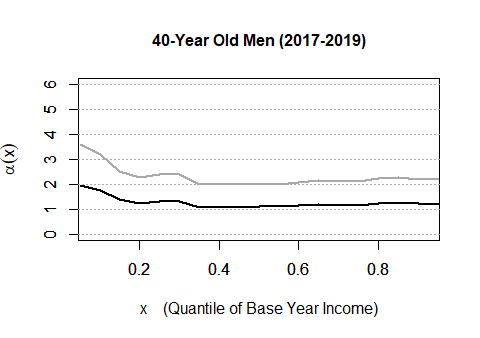}
		\includegraphics[width=0.49\textwidth]{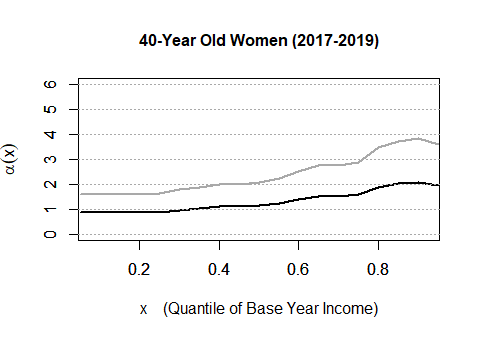}
		\includegraphics[width=0.49\textwidth]{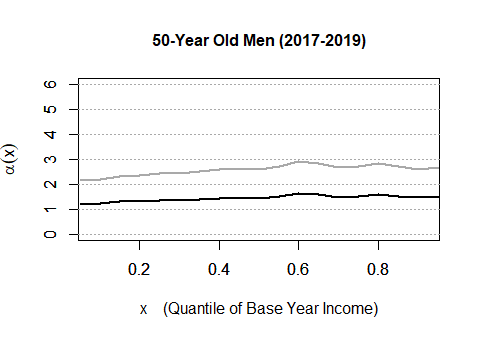}
		\includegraphics[width=0.49\textwidth]{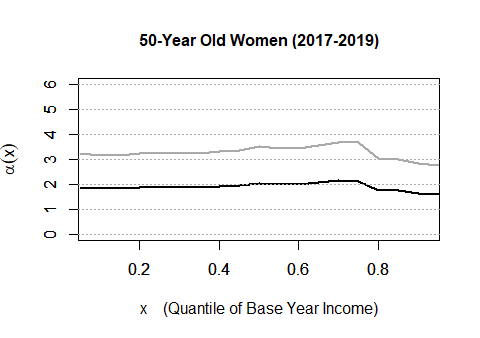}
	\caption{Estimates (black lines) and the one-sided 95\% confidence intervals (gray lines) of the Pareto exponents $\alpha(x_0)$ of the conditional tail risk for men (left) and women (right) based on the PSID in the period 2017--2019. The left (respectively, right) column shows the results for men (respectively, women). The top, middle, and bottom panels show results for 30-, 40- and 50-year-old individuals.}
	\label{fig:psid_2017}
\end{figure}

%
%
%

While our main focus has been on the period of great recession between 2007 and 2009, we next look at the period of positive growth, between 2017 and 2019, ten years later than the baseline period.
Figure \ref{fig:psid_2017} illustrates the results for income growth between 2017 and 2019 for 30-, 40- and 50-year-old individuals, respectively.
These results share similar qualitative patterns to those reported in Figures \ref{fig:psid_2007} as follows. 
First, 30-year-old men at high quantiles of base-year income enjoy less income risk.
Second, 30-year-old women suffer from high income risk except at the bottom quantiles of base-year income. 
Third, 40-year-old men have overall higher income risk than 30-year-old men.
Lastly, and most remarkably, the income risk is not necessarily lower in the period 2017--2019 than in the period 2007--2009, even though the former period enjoyed a positive GDP growth (5.2\% in two years) and the latter period suffered from negative GDP growth ($-$2.7\% in two years).
With all these similarities, there are differences as well -- especially in the graphs that do not resemble across the period between 2017 and 2019 and the period between 2007 and 2009 for 40-year-old women or 50-year-old men.
Overall, there are more similarities than differences despite the contrast between a recession and a positive growth in the US economy.


We also repeat the econometric analysis for other periods, 2009--2011, 2011--2013, 2013--2015 and 2015--2017 to demonstrate the robustness of the observed patterns reported above. 
These figures are similar to Figure \ref{fig:psid_2007} and hence omitted for conciseness. They are available upon request.

To sum up, for the periods 2007--2009 and 2017--2019 we found that:
1) the kurtosis, skewness, and even standard deviation may not exist for the conditional distribution of income growth given certain attributes (age, gender, and income);
2) younger women are more vulnerable to income risk than younger men;
3) middle-aged men are almost as vulnerable as middle-aged women; and
4) these patterns appear both in the period 2007--2009 of great recession and the period 2017--2019 of a positive growth, while there are differences as well.
The first and second points robustly hold in the remaining periods, 2009--2011, 2011--2013, 2013--2015 and 2015--2017.
The third point also continues to hold for the periods, 2009--2011 (to a less extent), 2011--2013 and 2013--2015.
Thus, the fourth point largely extends to the remaining periods, 2009--2011, 2011--2013, 2013--2015 and 2015--2017. 

\newpage
\bibliographystyle{ecta}
\bibliography{bib}
\vspace{0.5cm}
\textbf{Data Citation:}
Office for National Statistics. (2017). New Earnings Survey Panel Dataset, 1975-2016: Secure Access. [data collection]. 7th Edition. UK Data Service. SN: 6706, http://doi.org/10.5255/UKDA-SN-6706-7 
\end{document}